\newtheorem{xdefinition}{Definition}[section]
\newtheorem{xobservation}[xdefinition]{Observation}
\newtheorem{xremark}[xdefinition]{Remark}
\newtheorem{xtheorem}[xdefinition]{Theorem}
\newtheorem{xlemma}[xdefinition]{Lemma}
\newtheorem{xproposition}[xdefinition]{Proposition}
\newtheorem{xcorollary}[xdefinition]{Corollary}
\newenvironment{definition}{\begin{xdefinition}\rm}%
{\hspace*{\fill}\raisebox{-1pt}{\boldmath$\Box$}\end{xdefinition}}
{\hspace*{\fill}\raisebox{-1pt}{\boldmath$\Box$}\end{xobservation}}
{\hspace*{\fill}\raisebox{-1pt}{\boldmath$\Box$}\end{xremark}}
\newenvironment{theorem}{\begin{xtheorem}\rm}{\end{xtheorem}}
\newenvironment{lemma}{\begin{xlemma}\rm}{\end{xlemma}}
\newenvironment{proof}{\begin{trivlist}\item[]{\bf Proof }}%
{\hspace*{\fill}\raisebox{-1pt}{\boldmath$\Box$}\end{trivlist}}
\newcommand{\opt}{\ensuremath{\operatorname{\textsc{Opt}}}\xspace} 
\newcommand{\A}{\ensuremath{\operatorname{\mathbb{A}}}\xspace}
\newcommand{\alg}{\A}
\newcommand{\Alg}{\A}
\newcommand{\BSA}{\ensuremath{\operatorname{\textsc{Bsa}}}\xspace}
\newcommand{\REALS}{\mathbb{R}}
\renewcommand{\epsilon}{\varepsilon}
\newcommand{\floor}[1]{\left\lfloor #1 \right\rfloor}
\newcommand{\ceil}[1]{\left\lceil #1 \right\rceil}
\newcommand{\eps}{\ensuremath{\varepsilon}\xspace}
\newcommand{\mls}{\ensuremath{m_{\text{LS}}}\xspace}
\newcommand{\mlsm}{\ensuremath{\rounddown{m_{\text{LS}}}}\xspace}
\newcommand{\ms}{\ensuremath{m_{\text{S}}}\xspace}
\newcommand{\mll}{\ensuremath{m_{\text{LL}}}\xspace}
\newcommand{\mllp}{\ensuremath{\roundup{m_{\text{LL}}}}\xspace}
\newcommand{\ngood}{\ensuremath{n_{\text{G}}}\xspace}
\newcommand{\mres}{\ensuremath{m_{\text{R}}}\xspace}
\newcommand{\sgood}{\ensuremath{s_{\text{G}}}\xspace}
\newcommand{\leveld}{\ensuremath{d}\xspace}
\newcommand{\leveldp}{\ensuremath{\roundup{\leveld}}\xspace}
\newcommand{\mlsb}{\ensuremath{m_{\text{LSB}}}\xspace}
\newcommand{\nb}{\ensuremath{n_{\text{B}}}\xspace}
\newcommand{\mblackm}{\ensuremath{\rounddown{m_{\text{B}}}}\xspace}
\newcommand{\mblack}{\ensuremath{m_{\text{B}}}\xspace}
\newcommand{\ml}{\ensuremath{m_{\text{W}}}\xspace}
\newcommand{\mlm}{\ensuremath{\rounddown{m_{\text{W}}}}\xspace}
\newcommand{\nblack}{\ensuremath{x_{\text{B}}}\xspace}
\newcommand{\sblack}{\ensuremath{s_{\text{B}}}\xspace}
\newcommand{\sblackp}{\ensuremath{\roundup{s_{\text{B}}}}\xspace}
\newcommand{\sblackm}{\ensuremath{\rounddown{s_{\text{B}}}}\xspace}
\newcommand{\eblack}{\ensuremath{e_{\text{B}}}\xspace}
\newcommand{\eblackm}{\ensuremath{\rounddown{e_{\text{B}}}}\xspace}
\newcommand{\sizel}{\ensuremath{S_{\text{W}}}\xspace}
\newcommand{\roundup}[1]{\ensuremath{#1^+}\xspace}
\newcommand{\rounddown}[1]{\ensuremath{#1^-}\xspace}
\definecolor{skgreen}{rgb}{0.7, 0.8, 0.7}
\newcommand{\templatebin}[5]{\filldraw [draw=black, fill=#1, text=black] (#2,#3) rectangle ++(25,#4) node[midway] {\footnotesize $#5$};}
\newcommand{\bb}[4]{\templatebin{blue!10!}{#1}{#2}{#3}{#4}}
\newcommand{\rb}[4]{\templatebin{red!10!}{#1}{#2}{#3}{#4}}
\newcommand{\dgb}[4]{\templatebin{black!10!}{#1}{#2}{#3}{#4}}
\newcommand{\dgbgood}[4]{\templatebin{black!30!}{#1}{#2}{#3}{#4}}
\newcommand{\lfb}[4]{\templatebin{green!10!}{#1}{#2}{#3}{#4}}
\newcommand{\lfbblack}[4]{\templatebin{skgreen}{#1}{#2}{#3}{#4}}
\title{Online Bin Covering with Advice\thanks{The first, second, and fourth
    authors were supported in part by the Danish Council for Independent
    Research, Natural Sciences, grant DFF-1323-00247. A preliminary version of
    this paper appeared in the 16th International Algorithms and Data
    Structures Symposium (WADS), volume 11646 of Lecture Notes in Computer
    Science, Springer 2019.}
}
\author{Joan Boyar   \\ University of Southern Denmark
\\ \textsf{joan@imada.sdu.dk}
  \and
  Lene M. Favrholdt  \\ University
  of Southern Denmark \\ \textsf{lenem@imada.sdu.dk}
  \and
  Shahin Kamali \\   University of Manitoba \\
  \textsf{shahin.kamali@umanitoba.ca}
\and
  Kim S. Larsen  \\ University
 of Southern Denmark \\ \textsf{kslarsen@imada.sdu.dk}}
\date{}
\begin{document}

\maketitle

\begin{abstract}
  The bin covering problem asks for covering a maximum number of bins
  with an online sequence of $n$ items of different sizes in the range
  $(0,1]$; a bin is said to be covered if it receives items of total
  size at least~$1$. We study this problem in the advice setting and
  provide asymptotically tight bounds of $\Theta(n \log \opt)$ on the size of advice required to achieve
  optimal solutions.

  Moreover, we show that any algorithm with advice
  of size $o(\log \log n)$ has a competitive ratio of at most~$0.5$. In
  other words, advice of size $o(\log \log n)$ is useless for
  improving the competitive ratio of~$0.5$, attainable by an online
  algorithm without advice. This result highlights a difference
  between the bin covering and the bin packing problems in the advice
  model: for the bin packing problem, there are several algorithms
  with advice of constant size that outperform online algorithms
  without advice. Furthermore, we show that advice of size
  $O(\log \log n)$ is sufficient to achieve an asymptotic competitive ratio
  of~$0.5\bar{3}$ which is strictly better than
  the best ratio~$0.5$ attainable by purely online algorithms. The
  technicalities involved in introducing and analyzing this algorithm
  are quite different from the existing results for the bin packing
  problem and confirm the different nature of these two
  problems.

  Finally, we show that a linear number of advice bits is
  necessary to achieve any competitive ratio better than $15/16$ for the
  online bin covering problem.
\end{abstract}

\section{Introduction}
In the bin covering problem~\cite{AssmannJKL84}, the input is a
multi-set of items of different sizes in the range $(0,1]$ which need
to be placed into a set of bins. A bin is said to be covered if the
total size of items in it is at least 1. The goal of the bin covering
problem is to place items into bins so that a maximum number of bins
is covered.  In the online setting, items form a sequence which is
revealed in a piece-by-piece manner; that is, at each given time, one
item of the sequence is revealed and an online algorithm has to place
the item into a bin without any information about the forthcoming
items. The decisions of the algorithm are irrevocable.

Bin covering is closely related to the classic bin packing problem and
is sometimes called the dual bin packing problem\footnote{%
  There is another problem, also sometimes referred to as ``dual bin
  packing'', which asks for maximizing the number of items packed into
  a fixed number of bins; for the advice complexity of that dual bin
  packing problem, see~\cite{Renault17,BorodinPS18}.}. The input to
both problems is the same.  In the bin packing problem, however, the
goal is to place items into a minimum number of bins so that the total
size of items in each bin is at most 1. Online algorithms for bin
packing can naturally be extended to bin covering. For example,
Next-Fit is a bin packing algorithm which keeps one ``open'' bin at
any time: To place an incoming item $x$, if the size of $x$ is smaller
than the remaining capacity of the open bin, $x$ is placed in the open
bin; otherwise, the bin is closed (never used again) and a new bin is
opened. Dual-Next-Fit~\cite{AssmannJKL84} is a bin covering algorithm
that behaves similarly, except that it closes the bin when the total
size of items in it becomes at least 1.

\subsection{Offline algorithms}
In the offline setting, the bin packing and bin covering problems are
NP-hard.  There is an asymptotic fully polynomial-time approximation
scheme (AFPTAS) for bin covering~\cite{JansenS03}. There are also bin
packing algorithms which open $\opt + o(\opt)$
bins~\cite{KarmarkarK82,R13,HR17}, where $\opt$ is the number
of bins in the optimal packing.  The additive term was improved
in~\cite{R13} and further, to $O(\log\opt)$, in~\cite{HR17}.

\subsection{Online algorithms}
Online algorithms are often compared under the framework of
competitive analysis. Roughly speaking, the \emph{competitive ratio}
of a bin covering (respectively bin packing) algorithm is the minimum
(respectively maximum) ratio between the number of bins covered
(respectively opened) by the algorithm and that of an optimal offline
algorithm on the same input.

Despite similarities between bin covering and bin packing, the status
of these problems are different in the online setting.  In the case of
bin covering, it is known that no online algorithm can achieve a
competitive ratio better than 0.5~\cite{CsirikT88}, while bin covering
algorithms such as Dual-Next-Fit~\cite{AssmannJKL84} have the best
possible competitive ratio of 0.5. Hence, we have a clear picture of
the complexity of deterministic bin covering under competitive
analysis. The situation is more complicated for the bin packing
problem. It is known that no deterministic algorithm can achieve a
competitive ratio of 1.54278~\cite{Balogh_et_al_lowerBound18} while
the best existing deterministic algorithm has a competitive ratio of
1.5783~\cite{balogh_et_al}.  Note there is a gap between the best
known upper and lower bounds.

\subsection{Online algorithms with advice}
Advice complexity is a formalized way of measuring how much knowledge
of the future is required for an online algorithm to obtain a certain
level of performance, as measured by the competitive ratio.  When such
advice is available, algorithms with advice could lead to semi-online
algorithms.  Unlike related approaches such as
``lookahead''~\cite{Grove95} (in which some forthcoming items are
revealed to the algorithm) and ``closed bin
packing''~\cite{AsgeirssonACEMPVWW02} (where the length of the input
is revealed), \emph{any} information can be encoded and sent to the
algorithm under the advice setting.  This generality means that lower
bound results under the advice model also imply strong lower bound
results on semi-online algorithms, where one can infer impossibility
results simply from the length of an encoding of the information a
semi-online algorithm is provided with.  Advice complexity is also
closely related to randomization; complexity bounds from advice
complexity can be transferred to the randomization case and vice
versa~\cite{BKKK17,BockenhauerHK14,Mikkelsen16,DurrKR16}.

We use the advice on tape model defined in~\cite{HKK10,BKKKM17}: The
advice is generated by a benevolent oracle with unlimited
computational power. The advice is written on a tape and the algorithm
knows its meaning.  This general approach has been studied for many
problems (we refer the reader to a recent survey on advice complexity
of online problems~\cite{BoyarFKLM17}). In particular, bin packing has
been studied under the advice
model~\cite{BoyarKLL16,RenaultRS15,ADKRR18}.

For {\em bin covering with advice}, the decision of where to pack the
$i$th item is based on the content of the advice tape and the sizes of
the first $i$ items. For any bin covering algorithm, \alg, and any input sequence, $\sigma$, $\alg(\sigma)$ and $\opt(\sigma)$ denote the number of bins covered by \alg and an optimal offline algorithm, \opt, respectively, when given the input sequence $\sigma$. A bin covering algorithm, \alg, is {\em
  $c$-competitive with advice complexity $s(n)$} if there exists a
constant $b$ such that, for all $n$ and for all input sequences
$\sigma$ of length at most $n$, there exists some advice $\Phi$ such
that $\alg(\sigma) \geq c \cdot \opt(\sigma) - b$ and at most $s(n)$ bits of
$\Phi$ are accessed by the algorithm.  If $c=1$ and $b=0$, the
algorithm is {\em optimal}.  For a given algorithm, \alg, with a given
advice complexity, $s(n)$, the {\em competitive ratio} is
$\sup\{c \mid \alg \text{is } c\text{-competitive}\}$. Thus, competitive
ratios are fractions, with $1$ being the best possible.
For the {\em asymptotic} competitive ratio, the $b$ term is allowed to
be non-constant, as long as it is $o(\opt)$. Thus, the
{\em asymptotic competitive ratio} of \alg is $\limsup_{\opt(\sigma)
  \rightarrow \infty} \{\alg(\sigma)/\opt(\sigma)\}$.

Note that bin covering is a maximization problem.  For minimization
problems, such as bin packing, the competitive ratio is defined
analogously, except that the inequality is
$\alg(\sigma) \leq c \cdot \opt(\sigma) + b$.  Similarly, the competitive
ratio is the {\em infimum} over all $c$ such that algorithm is
$c$-competitive, giving values at least $1$.

For bin packing, it is known that $\Theta(n \log(\opt))$ advice bits
are necessary and sufficient to produce optimal
solutions~\cite{BoyarKLL16}, but a constant number of advice bits are
sufficient to obtain a competitive ratio close to $1.47$, beating the
best possible online algorithm without advice~\cite{ADKRR18}.
Furthermore, $2n+o(n)$ advice bits suffice to get arbitrarily close to
a competitive ratio of $4/3$~\cite{BoyarKLL16}, and getting below
$1.17$ requires at least a linear number of bits~\cite{Mikkelsen16}.
In~\cite{RenaultRS15}, $(1+\epsilon)$-competitive online algorithms
using $O(n \cdot \frac{1}{\epsilon}\log\frac{1}{\epsilon})$ advice
bits are designed based on round and group techniques known from
offline algorithms.

\subsection{Contributions}
In this article, we provide the first results with respect to the
advice complexity of the bin covering problem.

To obtain an optimal result, advice essentially corresponding to an
encoding of an entire optimal solution is necessary and
sufficient. Not surprisingly, this follows from a similar proof for
bin packing, since for both problems, bins filled to size one in an
optimal solution are at the core of the proof.

However, unlike the bin packing problem, advice of constant size
cannot improve the competitive ratio of algorithms. We establish
this result by showing that any algorithm with advice of size
$o(\log \log n)$ has a competitive ratio of at most 0.5, which is the
optimal competitive ratio of online algorithms without advice.

We prove a tight result that advice of size $O(\log \log n)$ suffices
to achieve a competitive ratio arbitrarily close to $0.5\bar{3}$.
Some techniques that we develop for this result are quite different
from the existing results for bin packing and are likely helpful for
future analysis of bin covering with advice.  The idea is to let the
advice communicate the number of bins of certain ``types'' in an
optimal packing.  However, to get down to $O(\log\log n)$ bits of
advice, only approximate values are given. This idea is similar to
that for bin packing in~\cite{ADKRR18}, except that in~\cite{ADKRR18}
only a constant number of bits are used, approximating the ratio of
the number of bins in two different sets. 
Another difference between our algorithm and previous solutions for bin packing is that, we employ the ``dual-worst-fit'' scheme for placing certain items for bin covering. In contrast, the worst-fit algorithm for bin packing has not been used previously and does not seem to be helpful.

Finally, using a reduction from the binary string guessing
problem~\cite{BockenhauerHKKSS14}, we show that advice of linear size
is necessary to achieve any competitive ratio larger than 15/16. This
is similar to, but more intricate, than the corresponding result for
bin packing.

\subsection{Techniques}
We provide an intuitive explanation of the difference between bin
packing and bin covering under the advice model, and use this
explanation to describe our techniques in designing algorithms and impossibility results for the bin covering problem with advice.
Online bin packing is
relatively ``easy'' when items are relatively large (close to 1) or small (close to 0). Online
algorithms can place large and small items separately, and this gives
relatively good competitive ratios because an optimal algorithm has to
open a bin for every large item and the online algorithm can fill any
bin almost completely with small items. In contrast, in the bin packing
problem, the ``tricky'' items are those that are close to $1/2$ (a bit
more or less than $1/2$), and other items can be handled without
wasting too much space. For inputs formed by the tricky items, a bin
packing algorithm acts like a ``matching algorithm'', where items
smaller than $1/2$ can match with themselves or some items larger than
$1/2$. Advice can help by encoding the number of items slightly larger
than $1/2$. This is consistent with reserving some space for
``critical'' items, which is the main technique used in some results
for bin packing~\cite{BoyarKLL16,ADKRR18}. The resulting bins reflect
the matching of large and small items or small items among
themselves. It turns out that if we know the ratio between these two
``types'' of bins (approximated by a constant number of bits), we can
do better than any deterministic online algorithm (see~\cite{ADKRR18}
for details).  For bin covering, however, the tricky items are those
that are either very large (close to 1) or very small (close to
0). Inputs formed by such tricky items are used in~\cite{CsirikT88} to
derive a lower bound on the competitive ratio of purely online
algorithms. We also use them in Section~\ref{sect:lower} to derive
an impossibility result when the advice size is $o(\log \log n)$. Similarly to
bin packing, for inputs formed by tricky items, a bin covering
algorithm becomes a ``matching algorithm''. The matching process is
albeit a bit harder than it is for bin packing. This is because,
unlike bin packing where matching two large items is not possible (as
they do not fit in the bin), in bin covering, matching two large items
is possible and sometimes necessary.  To be a bit more precise, in bin
covering, we cannot afford having a bin with an unmatched large item;
such a bin will be almost full, but not covered, which implies that
the item placed in the bin is wasted.  When designing algorithms for
bin covering in Section~\ref{sec:loglog}, we exploit advice in order
to form packings that ensure that large items are always matched with
other items, preferably with a set of small items, and if not possible
with other large items. This ensures that items are not wasted in our
packings, that is, all bins that receive items will be covered (except
potentially $o(\opt)$ of them).
This family of packings is
formally defined in Section~\ref{sec:loglog} as
``$(\alpha,\eps)$-desirable packings'', in which large items are matched
with each other, except for a fraction $\alpha$ of them that are
placed in bins that are covered by small items. The value of $\eps$ is a
parameter determined by the accuracy of the encodings of
the ``approximate values'' of a few numbers passed
to the algorithm in the form of advice. 

In Section~\ref{sec:sublin}, we use a reduction from the binary
separation problem (closely related to the binary string guessing
problem) to derive an impossibility result when the advice size is
sub-linear. While reductions of this form have been used for the bin
packing problem, there are major differences between these
reductions. These differences are again rooted in the fact that hard
sequences for bin covering are formed by items that are close to 0 or
1, and that there are more possibilities for placing items in a bin
covering instance compared to bin packing.

Throughout the paper, we let the {\em level} of a bin denote the total
size of items packed in that bin at the current time.

\section{Optimal covering and advice}
It is not hard to see that advice of size $O(n \log(\opt(\sigma)))$ is
sufficient to achieve an optimal covering for an input $\sigma$ of
length $n$; note that $\opt(\sigma)$ denotes the number of bins in an
optimal covering of $\sigma$. Provided with $O(\log(\opt(\sigma)))$
bits of advice for each item, the offline oracle can indicate in which
bin the item is placed in the optimal packing. Provided with this
advice, the online algorithm just needs to pack each item in the bin
indicated by the advice. Clearly, the size of the advice is
$O(n \log(\opt(\sigma)))$ and the outcome is an optimal packing.  Note
that it is always assumed that the oracle that generates the advice
has unbounded computational power. However, if the time complexity of
the oracle is a concern, we can use the AFPTAS of~\cite{JansenS03} to
generate an almost-optimal packing and encode it in the advice.
Similarly, if the input is assumed to have only $m$ distinct known
sizes, one can encode the entire request sequence, specifying for each
distinct size how many of that size occur in the sequence. This only
requires $O(m\log(n))$ bits of advice.  The following theorem shows
that the above naive solutions are asymptotically tight.

\begin{theorem}
  For online bin covering on sequences $\sigma$ of length $n$, advice
  of size $\Theta(n \log \opt(\sigma))$ is required and sufficient to
  achieve an optimal solution, assuming
  $2\opt(\sigma) \leq (1-\epsilon) n$ for some positive value of
  $\epsilon$.  When the input is formed by $n$ items with $m\in o(n)$
  distinct, known item sizes, advice of size $\Theta(m \log n)$ is
  required and sufficient to achieve an optimal solution.
\end{theorem}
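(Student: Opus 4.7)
Both sufficiency claims are essentially justified in the paragraph preceding the theorem statement. For the general bound, the oracle computes an optimal covering of $\sigma$ offline and writes down, for each of the $n$ items, the index (in $\{1,\dots,\opt(\sigma)\}$) of the bin containing it; this costs $\lceil\log\opt(\sigma)\rceil$ bits per item and so $O(n\log\opt(\sigma))$ bits in total, and the algorithm simply follows the advice. For the $m$-distinct-sizes bound, the oracle instead writes down the multiplicity of each of the $m$ known sizes; this costs $O(\log n)$ bits per size, so $O(m\log n)$ in total, and since the entire multiset $\sigma$ is now known to the algorithm, it can compute an optimal covering offline and commit each arriving item to its prescribed bin.

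\textbf{Plan for the lower bounds.} I would adapt the classical bin-packing argument of~\cite{BoyarKLL16}, replacing the packing condition by the covering condition. The strategy is to exhibit a family $\mathcal{F}$ of input sequences of length $n$ that share a long common prefix but whose optimal coverings require pairwise distinct assignments on that prefix. Since the algorithm's behaviour on the common prefix depends only on the advice (the sizes seen so far are identical across $\mathcal{F}$), being optimal on every member of $\mathcal{F}$ demands at least $|\mathcal{F}|$ distinct advice strings, and a pigeonhole argument then gives $s(n)\geq\log|\mathcal{F}|$. For the $\Omega(n\log\opt)$ bound, the common prefix would consist of $k\cdot\opt(\sigma)$ items of identical size $1/k$, with $k$ chosen so that the prefix has length $\Theta(n)$ and $\opt$ groups of $k$ such items each cover a bin. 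A distinguishing suffix of ``tag'' items, drawn from the $\Omega(n)$ slack items provided by the hypothesis $2\opt(\sigma)\leq(1-\eps)n$, is then used to single out one of $\opt^{\Theta(n)}$ possible groupings of the prefix items as the unique optimum, yielding $|\mathcal{F}|=2^{\Omega(n\log\opt)}$.

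For the $\Omega(m\log n)$ bound, the family $\mathcal{F}$ instead consists of inputs using the same $m$ sizes but with different multiplicity vectors. There are $n^{\Omega(m)}$ multiplicity vectors summing to $n$, and a suitably chosen subfamily (with a common prefix that is extended by a suffix whose length encodes the remaining multiplicities) yields pairwise distinct unique optimal coverings, giving $s(n)\in\Omega(m\log n)$ by the same counting argument.

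\textbf{Main obstacle.} The delicate step is the explicit design of the tag suffix: for each of the $\opt^{\Theta(n)}$ intended groupings of the prefix, one must exhibit a legal suffix that realises precisely that grouping as the unique optimum covering, while using only a constant fraction of the $(1-\eps)n$ slack items and while obeying the covering constraint (level $\geq 1$) rather than the packing constraint (level $\leq 1$). This is the point at which the argument diverges from the bin-packing version, since for covering one has to worry about bins that end up overshooting level $1$ by too much rather than about bins that fail to fit an item; the high-level counting argument itself is unchanged.
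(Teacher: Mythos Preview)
Your treatment of the upper bounds matches the paper exactly.

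For the lower bounds, your plan is in the right spirit but misses a shortcut that makes the ``main obstacle'' you identify disappear entirely. The paper does not re-engineer the family of inputs for the covering constraint; instead it observes that the lower-bound constructions for bin packing in~\cite[Theorems~1,~3]{BoyarKLL16} already have the property that the optimal solution fills every bin to level exactly~$1$. On such instances, an optimal bin packing \emph{is} an optimal bin covering (same bins, each at level exactly~$1$), and any non-optimal packing---which must open strictly more bins---necessarily leaves some bin below level~$1$ and hence covers strictly fewer than $\opt$ bins. The lower bounds therefore transfer verbatim, in one line.

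So the ``delicate step'' of redesigning the tag suffix to cope with bins overshooting level~$1$ is not needed: the existing bin-packing suffix already forces a unique grouping in which every bin sits at exactly~$1$, and any deviation from that grouping loses a covered bin just as it would cost an extra packed bin. Your route would presumably work once the suffix is built, but it re-derives from scratch what can be obtained by inspection of the cited construction.
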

\begin{proof}
  The lower bounds follow immediately from the corresponding results
  for bin packing~\cite[Theorems~1,~3]{BoyarKLL16}. Since the optimal
  result in those proofs have all bins filled to level~$1$, any
  non-optimal bin packing would also lead to a non-optimal bin
  covering.
\end{proof}

\section{Advice of size $o(\log \log n)$ is not helpful}\label{sect:lower}
In this section, we show that advice of size $o(\log \log n)$ does not
help for improving the competitive ratio of bin covering
algorithms. This result is in contrast to bin packing where advice of
constant size can improve the competitive ratio. Our lower bound
sequence is similar to the one in~\cite{CsirikT88}, where the authors
proved a lower bound on the competitive ratio of purely online
algorithms.

\begin{theorem}
  There is no algorithm with advice of size $o(\log \log n)$ and
  competitive ratio better than $1/2$.
\end{theorem}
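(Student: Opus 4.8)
The plan is to follow the adversary/indistinguishability strategy of Csirik and Totik, adapted to the advice model: since advice of size $o(\log\log n)$ corresponds to only $2^{o(\log\log n)} = (\log n)^{o(1)}$ distinct advice strings, the oracle cannot ``steer'' the algorithm on more than this many essentially different inputs, and we will construct a family of inputs, larger than this bound, on which no single advice string can help. First I would fix a large parameter $k$ and consider a sequence that begins with a long prefix of tiny items — say $N$ items each of size $\eps$ for a very small $\eps$, so the prefix has total size $N\eps$ — followed by a suffix consisting of some number $j$ of ``large'' items of size close to $1$ (e.g. size $1-\eps$). For the input $\sigma_j$ with exactly $j$ large items at the end, an optimal solution covers roughly $\min\{N\eps + j,\ \text{something}\}$ bins: it pairs each large item with just enough tiny items to reach level $1$ (using about $\eps/\eps = 1$ worth of tiny mass... more precisely $\eps$ worth, i.e. a $1/\eps$ fraction) and uses the remaining tiny items to cover bins on their own, giving $\opt(\sigma_j) \approx N\eps + j$ when the tiny mass is plentiful, but only about $2j$ when $j$ is large relative to $N\eps$. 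The key point is that the prefixes of all the $\sigma_j$ are identical, so the algorithm must commit to a placement of the tiny items before learning $j$, and the advice — being one of only $(\log n)^{o(1)}$ strings — partitions the possible values of $j$ into that few classes.

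Next I would argue that for the algorithm to be good on $\sigma_j$ it must, at the moment the large items arrive, have its tiny items arranged so that roughly the right number of ``almost-full'' bins (bins at level just below $1$, waiting for a large item) are available — too few and the extra large items get wasted (each wasted large item costs a full bin, since a bin holding one unmatched size-$(1-\eps)$ item is not covered), too many and the tiny mass that went into over-filling those bins is squandered relative to the case of small $j$. Making this quantitative: if the algorithm prepares $t$ slots of almost-full bins using the tiny items, then on $\sigma_j$ it covers at most about $\min\{j,t\}$ bins from matched large items plus at most about $(N\eps - \eps\min\{j,t\})/1$ bins from the leftover tiny mass, and one checks that no single value of $t$ is simultaneously within a $(1/2+\delta)$ factor of optimal for both a small target value $j_1$ and a large target value $j_2$ of the number of large items. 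Since there are many more ``sufficiently spread out'' target values of $j$ (one can take $j$ ranging over a geometric-type scale up to $n$, giving $\Theta(\log n)$ genuinely different targets, which already beats $(\log n)^{o(1)}$) than there are advice strings, by pigeonhole two targets $j_1 < j_2$ share an advice string, hence share the algorithm's entire behavior on the common prefix, hence share the value $t$; on at least one of $\sigma_{j_1},\sigma_{j_2}$ the algorithm then covers at most $(1/2 + o(1))\opt$ bins, and letting $\eps\to 0$ and $k\to\infty$ pushes the ratio down to $1/2$.

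The main obstacle — and the step I would spend the most care on — is pinning down the extremal bound ``no single $t$ works for two well-separated targets'' with the right constants, because bin covering allows more placement freedom than bin packing: the algorithm need not put the tiny items into clean ``full'' or ``empty'' buckets, it may spread the prefix mass across many bins at various intermediate levels, and it may also match large items with each other (two size-$(1-\eps)$ items cover a bin) rather than with tiny mass. So the counting argument must be robust to an arbitrary configuration of the prefix bins; the clean way to do this is to track only two aggregate quantities at the end of the prefix — the total tiny mass already ``locked'' into bins that are at level $\ge 1-\eps$ (these are the only bins a single large item can complete), versus the total tiny mass sitting in bins at level $< 1 - \eps$ — and to bound, for a general suffix of $j$ large items, the number of covered bins as a function of $j$ and this split, also accounting for the option of pairing leftover large items two-per-bin. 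One then shows the resulting upper bound on $\Alg(\sigma_j)$, maximized over the split, still cannot track $\opt(\sigma_j) = \Theta(\min\{N\eps + j,\ 2j\} )$ to within a factor better than $1/2$ across all $j$ in our scale. The remaining details — choosing $N$, $\eps$, and the scale of $j$-values as functions of $n$ so that $n$ is an upper bound on the total sequence length and the number of targets exceeds $2^{s(n)}$ for any $s(n) = o(\log\log n)$ — are routine bookkeeping.
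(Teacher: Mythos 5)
Your proposed adversary family does not actually create the trade-off you need, and this is the decisive gap. You take large items of a \emph{fixed} size (e.g., $1-\eps$) and vary only their number $j$. But then a bin with a single tiny item of size $\eps$ can already be completed by any one large item, so the algorithm is essentially free to hedge: place, say, $t = N/3$ tiny items into $t$ singleton bins (total mass $t\eps$, a vanishing fraction of the prefix mass), and stack the remaining $N-t$ tiny items to cover roughly $(N-t)\eps$ bins outright. A short computation shows that on $\sigma_j$ this deterministic strategy covers at least $\min\{1 - t/N,\ \tfrac12 + t/(2N)\} \cdot \opt(\sigma_j) - O(1)$ bins for every $j$; taking $t/N = 1/3$ gives ratio $\geq 2/3 - o(1)$ uniformly in $j$, with no advice at all. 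So your family is not even hard for plain online algorithms, and the pigeonhole over advice strings never gets off the ground. The paper's construction avoids this by making the large-item size \emph{depend on $j$}: in $\sigma_j$ the large items have size $1-j\eps$, so a bin ready to receive one large item must contain at least $j$ tiny items. This is what makes the prefix commitment costly: writing $m_i$ for the number of bins with at least $i$ tiny items, the algorithm's options satisfy the conservation law $\sum_{i=1}^n m_i = n$, and it is this constraint that prevents one prefix packing from simultaneously serving many values of $j$.

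There is a second, smaller gap in the counting step. You propose $\Theta(\log n)$ targets on a geometric scale and a pigeonhole argument that two targets $j_1 < j_2$ share an advice string, then claim the algorithm must fail on one of them. Even granting a correct adversary family, this is not how the contradiction arises: with the paper's family one \emph{can} choose $m_{j_1}$ and $m_{j_2}$ both large (e.g., $\delta n$ each) while respecting $\sum m_i \leq n$, so two targets do not conflict. The paper instead sums the per-$j$ requirement $\mu \lfloor n/j \rfloor \leq m_j/2 + d$ over all $j$ handled by a given advice string, uses $\sum m_i = n$ to bound each such sub-family sum by a constant, and then lets $j$ range densely over $1,\dots,\lfloor\sqrt{n}\rfloor$ so that the global sum $\sum 1/j$ is $\Theta(\log n)$ — contradicting the $o(\log n)$ sub-families. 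A geometric scale of $j$ gives $\sum 1/j = O(1)$ overall, so even the corrected family would not yield a contradiction via that scale.

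What your proposal does get right is the high-level shape: an identical long prefix of tiny items, suffixes distinguishing the instances, indistinguishability forced by the bounded number of advice strings, and the worry that bin covering permits more placement freedom (pairing large items, spreading tiny mass across intermediate levels). To repair the argument you should (i) scale the large-item size with $j$ so that the algorithm is forced into the $m_j$ bookkeeping, (ii) replace ``pigeonhole over two targets'' by the summation over each sub-family exploiting $\sum_i m_i = n$, and (iii) take $j$ over all integers in $\{1,\dots,\lfloor\sqrt{n}\rfloor\}$ so the harmonic sum actually diverges.
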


\begin{proof}
Consider a family of sequences formed as follows:
  $$\sigma_j = \langle \underbrace{\epsilon, \epsilon, \ldots,
  \epsilon}_{n \ \text{items}}, \underbrace{1-j \epsilon, 1-
  j\epsilon, \ldots, 1- j \epsilon}_{\lfloor n/j\rfloor
  \ \text{items}} \rangle$$
Here, $j$ ranges from 1 through $\lfloor\sqrt{n}\rfloor$, giving rise
to $\lfloor\sqrt{n}\rfloor$ sequences in the family. All sequences
start with the same prefix of $n$ items of size~$\epsilon$. We assume
that $\epsilon < \frac{1}{2n}$ to ensure that, even if all these items
are placed in the same bin, the level of that bin is still less than
1/2. Note that the suffix, formed by items of size $1-j\epsilon$, has
length $O(n)$, so the length of any of the sequences is $\Theta(n)$.

Clearly, for packing $\sigma_j$, an optimal algorithm places $j$ items
of size $\epsilon$ in each bin and covers $\lfloor n/j\rfloor$
bins. So we have $\opt(\sigma_j) = \lfloor n/j\rfloor$.

The proof is by contradiction, so assume there is an algorithm, \alg,
using $o(\log\log n)$ advice bits and having competitive ratio
$1/2+\mu$ for some constant $0<\mu\leq\frac12$. Then, there exists a fixed
constant $d$ such that for any sequence $\sigma_j$, we have
\begin{equation}
  \alg(\sigma_j) \geq (1/2+\mu) \opt(\sigma_j) - d =
  \frac{1}{2}\left\lfloor\frac{n}{j}\right\rfloor + \mu
  \left\lfloor\frac{n}{j}\right\rfloor - d
\label{eq:lowbnd1}
\end{equation}

We say two sequences belong to the same \emph{sub-family} if they
receive the same advice string.  Since the advice has size $o(\log\log
n)$, there are $o(\log n)$ sub-families.  Let $\sigma_{a_1}, \ldots,
\sigma_{a_w}$ be the sequences in one sub-family.  Since the advice
and the first $n$ items (of size $\epsilon$) are the same for any two
members of this sub-family, \Alg will place these $n$ items
identically. Let $m_i$ denote the number of bins receiving at least
$i$ items in such a placement. So, we have $\sum_{i=1}^n m_i = n$ (a
bin with $x$ items is counted $x$ times, since it receives at least
one item, at least two items, etc.). Moreover, for any $\sigma_j$, we
have
\begin{equation}
  \alg(\sigma_j) \leq m_j + (\lfloor n/j\rfloor-m_j)/2 = \frac{1}{2}\left\lfloor\frac{n}{j}\right\rfloor
  + \frac{m_j}{2}
\label{eq:lowbnd2}
\end{equation}

This follows since any bin with at least $j$ items of size $\epsilon$
can be covered using only one item of size $1-j\epsilon$, while the
other bins require two such items.  From Equations~(\ref{eq:lowbnd1})
and~(\ref{eq:lowbnd2}), we get $\mu
\left\lfloor\frac{n}{j}\right\rfloor \leq \frac{m_j}{2} + d$. Summing
over $j \in \{ a_1, \ldots, a_w\}$, we get that
$$\mu n \left( \frac{1}{a_1}+\frac{1}{a_2} + \ldots + \frac{1}{a_w} \right)-\mu w \leq \frac12(m_{a_1} + m_{a_2} + \ldots + m_{a_w}) + wd,$$
where $\mu w$ is subtracted to compensate for removing the floor.
Since $\frac12 (m_{a_1} + m_{a_2} + \ldots + m_{a_w}) + (d+\mu)w \leq \frac12 \cdot \sum\limits_{i=1}^n m_i + (d+\frac12)n = (d+1)n$, we have
$$\frac{1}{a_1}+\frac{1}{a_2} + \ldots + \frac{1}{a_w} \in O(1)$$
Summing the left-hand side over all families, we include every
sequence and therefore every fraction, $\frac{1}{i}$, once and obtain
$\Sigma_{i=1}^{\lfloor\sqrt{n}\rfloor}\frac{1}{i}$. Since there are
$o(\log n)$ sub-families, and the contribution to the sum of fractions
from each sub-family has been proven constant, it follows that
$\Sigma_{i=1}^{\lfloor\sqrt{n}\rfloor}\frac{1}{i} \in o(\log n)$.
This is a contradiction, since the Harmonic number
$\Sigma_{i=1}^{\lfloor\sqrt{n}\rfloor}\frac{1}{i}\in\Theta(\log
\sqrt{n})=\Theta(\log n)$.

Thus, our initial assumption is wrong and with advice of size
$o(\log \log n)$, no algorithm with competitive ratio strictly better
than $1/2$ can exist.

For clarity in the exposition, we have ignored the issue of the
sequences in the family having different lengths, all of which are
larger than~$n$.  However, all sequences have length at most~$2n$, so
we can take that to be our~$n$, and pad all sequences with items small
enough that even the sum of them cannot fill up the missing space in
any bin. For the asymptotic result proven here, changing $n$ by a
factor of at most~$2$ is immaterial.
\end{proof}

\section{An algorithm with advice of size $O(\log \log n)$}\label{sec:loglog}
In this section, we show that advice of size $O(\log \log n)$ is
sufficient to achieve an asymptotic competitive ratio of
$8/15 = 0.5\bar{3}$.

Throughout this section, we call an item \emph{small}
if it has size less than 1/2 and \emph{large} otherwise.

\newcommand{\MYFORMATA}[1]{\begin{minipage}{1cm}\mbox{}\vspace*{.3ex}\par\centerline{#1}\vspace*{-1ex}\mbox{}\end{minipage}}
\newcommand{\MYFORMATB}[1]{\begin{minipage}{9.5cm}\mbox{}\vspace*{.3ex}\par #1\newline\vspace*{-1ex}\mbox{}\end{minipage}}

\begin{table}[!b]
\caption{Notation used in Section~\ref{sec:loglog}}\label{table:notation}
\begin{center}
\begin{tabular}[!b]{ |c|c| } 
 \hline 
 \MYFORMATA{Notation} & \MYFORMATB{Meaning}  \\ 
 \hline \hline
 \MYFORMATA{$n$} & \MYFORMATB{The length of the input.} \\  \hline
 \MYFORMATA{$\mls$} & \MYFORMATB{The number of LS-bins in the optimal packing.} \\ \hline
 \MYFORMATA{$\mll$} & \MYFORMATB{The number of LL-bins in the optimal packing.} \\ \hline
 \MYFORMATA{$\ms$} & \MYFORMATB{The number of S-bins in the optimal packing.} \\ \hline
 \MYFORMATA{$\beta$} & \MYFORMATB{The value of $\frac{\mls+\mll}{\mls}$. The algorithm behaves differently when  $\beta \geq 15/14$ compared to when $\beta < 15/14$.} \\ \hline
  \MYFORMATA{$\alpha$} & \MYFORMATB{A parameter of the algorithm when $\beta < 15/14$. Approximately $\lfloor \alpha \mls \rfloor$ of covered bins include exactly one large item.} \\   \hline
  \MYFORMATA{\eps} & \MYFORMATB{$\eps \in O\left(\frac{1}{\log n}\right)$.} \\
 \hline
  \MYFORMATA{\ngood} & \MYFORMATB{$\ngood = \floor{\alpha \mls/3}$ is
    the number of good items.} \\
   \hline
  \MYFORMATA{\leveld} & \MYFORMATB{$\leveld = 1 - \sgood$, where \sgood is the size of the smallest good item.} \\
   \hline
  \MYFORMATA{\sblack} & \MYFORMATB{The size of the \nb'th smallest
    black item.} \\
   \hline
  \MYFORMATA{\mres} & \MYFORMATB{The number of reserved bins.} \\
   \hline
  \MYFORMATA{\mblack} & \MYFORMATB{Among the reserved bins, \mblack
    bins are black bins.} \\
 \hline
  \MYFORMATA{\ml} & \MYFORMATB{Among the reserved bins, \ml
    bins are white bins.}\\
 \hline
\end{tabular}
\end{center}
\end{table}

Consider a packing of the input sequence $\sigma$. We partition the
bins in this packing into three groups. A \emph{large-small} (LS) bin
includes one large item and some small items, a \emph{large-large}
(LL) bin includes only two large items, and a \emph{small} (S) bin
includes only small items.  We use $\ms$, $\mls$, and $\mll$,
respectively, to denote the number of S-bins, LS-bins, and LL-bins in
the optimal packing. For $\mls \geq 1$, we let $\beta\geq 1$ satisfy
$\mls+\mll = \beta \mls$. See Table~\ref{table:notation} for a summary
of notation used in this section.

\subsection{A simple algorithm for the case $\mls = 0$ or $\beta\geq 15/14$}
The following lemma shows that sequences, where there are not too many LS-bins compared to LL-bins
in an optimal packing, are ``easy'' instances. More specifically, when
there are at most $14$ times as many LS-bins as LL-bins, there is a
simple $8/15$-competitive algorithm.
\begin{lemma}\label{lem:easycaselog}
  When $\mls=0$ or $\beta \geq 15/14$,
  there is an online bin covering algorithm with competitive ratio at least $8/15$.
\end{lemma}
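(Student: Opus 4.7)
The plan is to exhibit a simple fixed online algorithm and show that, under either hypothesis, the naive counting analysis already delivers the ratio $8/15$. Call an item \emph{large} if its size is at least $1/2$ and \emph{small} otherwise. The algorithm maintains one ``pair-bin'' for large items and a separate Dual-Next-Fit stream for small items. A pair-bin is closed as soon as it receives a second large item (two large items have total size $\geq 1$, so the bin is covered); each small bin is closed as soon as its level reaches $1$. The algorithm is online and uses no advice at all.

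First, I would give an input-only lower bound on $\alg(\sigma)$. Let $L$ be the number of large items and $S$ the total size of small items in $\sigma$. The large stream contributes $\lfloor L/2\rfloor$ covered bins. For the small stream, every closed bin has level strictly less than $3/2$ (the last item added is of size $< 1/2$), and the at most one remaining open bin has level $< 1$; so if $k$ small bins are closed, then $S < \tfrac{3k}{2} + 1$, giving $k \geq \tfrac{2(S-1)}{3}$. Combining yields $\alg(\sigma) \geq L/2 + 2S/3 - O(1)$.

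Next, I would translate to the structure of an optimal packing. Because \opt uses one large item per LS-bin and two large items per LL-bin, $L \geq \mls + 2\mll$; and because every S-bin is covered using only small items, $S \geq \ms$. Substituting gives
\[
\alg(\sigma) \;\geq\; \tfrac{\mls}{2} + \mll + \tfrac{2\ms}{3} - O(1),
\]
while $\opt(\sigma) = \mls + \mll + \ms$.

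Finally, I would conclude by case analysis on the ratio
\[
f(\mls,\mll,\ms) \;=\; \frac{\mls/2 + \mll + 2\ms/3}{\mls + \mll + \ms}.
\]
If $\mls = 0$, then $f = (\mll + 2\ms/3)/(\mll + \ms) \geq 2/3 > 8/15$. If $\beta \geq 15/14$, equivalently $\mll \geq \mls/14$, a short computation shows that $\partial f/\partial \ms$ has the sign of $\mls - 2\mll$, so over $\ms \geq 0$ the minimum is attained at $\ms = 0$ (when $\mll < \mls/2$) or asymptotically as $\ms \to \infty$ (when $\mll \geq \mls/2$). The second limit equals $2/3$. The first, as a function of $x := \mll/\mls$, equals $(1/2 + x)/(1 + x)$, which is strictly increasing in $x$ and therefore, under $x \geq 1/14$, is minimized at $x = 1/14$ with value $(1/2 + 1/14)/(1 + 1/14) = 8/15$. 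Hence $f \geq 8/15$ in both subcases, and combining with the $O(1)$ additive loss above gives the claimed $8/15$-competitiveness. The one non-routine step is identifying the extremal configuration $(\ms = 0,\ \mll = \mls/14)$, which the partial-derivative argument pins down; the rest is direct substitution.
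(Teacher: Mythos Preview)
Your proof is correct and follows essentially the same approach as the paper: the same algorithm (pair large items, run Dual-Next-Fit on small items), the same counting bounds $\alg(\sigma)\ge \mls/2+\mll+2S/3-O(1)$ and $\opt(\sigma)\le \mls+\mll+S$, and the same extremal configuration $(\ms=0,\ \mll/\mls=1/14)$. The only cosmetic difference is the final minimization: the paper applies the mediant inequality $\frac{a+b}{c+d}\ge\min\{a/c,b/d\}$ to obtain $\min\{(2\beta-1)/(2\beta),\,2/3\}$ in one line, whereas you reach the same conclusion via a partial-derivative case split.
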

\begin{proof}
  Consider a simple algorithm, \alg, that places large and small items
  separately. Each pair of large items cover one bin and small items
  are placed using the Dual-Next-Fit strategy, that is, they are
  placed in the same bin until the bin is covered (and then a new bin
  is started).  Let $S$ denote the total size of small items. Note
  that the number of large items is $\mls+2\mll$. The number of bins
  covered by \alg is at least
  $\lfloor (\mls+2\mll)/2 \rfloor + \lfloor 2S/3 \rfloor
\geq (\mls+2\mll)/2  +  2S/3 -2 $.  The number of
  bins covered by \opt is at most $\mls+\mll+\lfloor S \rfloor$.  Thus, for
  any input sequence, $\sigma$,
\begin{align*}
 \alg(\sigma)
 & \geq \frac{(\mls+2\mll)/2 + 2S/3}{\mls+\mll+S} \cdot \opt(\sigma) - 2
\end{align*}
For $\mls=0$, this proves a ratio of at least $2/3$ which is larger
than $8/15$. For $\beta
\geq 15/14$, this gives a competitive ratio of at least
$\min\{2/3,\frac{2\beta-1}{2\beta}\}$, since
\begin{align*}
  \frac{(\mls+2\mll)/2}{\mls+\mll}
    = \frac{2\mls+2\mll-\mls}{2\mls+2\mll}
    = \frac{2 \frac{\mls+\mll}{\mls}-1}{2 \frac{\mls+\mll}{\mls}}
    = \frac{2\beta-1}{2\beta}.
\end{align*}
For $\beta \geq 15/14$, $(2\beta-1)/(2\beta) \geq 8/15$.
This proves the lemma.
\end{proof}

\subsection{A more complicated algorithm for the case $\mls \geq 1$ and $\beta < 15/14$}
In what follows, we define $(\alpha,\eps)$-desirable packings, which
act as reference packings for our algorithm. Here $\alpha$ and $\eps$
are parameters of the algorithm that we will introduce later.  Note
that in an optimal packing, there are $\mls$ bins covered by one large
item and some small items. Consider a covering in which large and
small items are placed in separate bins. Clearly, there are 0 bins
that are covered by one large and some small items and, in the worst
case, the level of all bins covered by small bins is almost 1.5 (when
all small items have size a bit less than 0.5).  Intuitively speaking,
an $(\alpha,\eps)$-desirable packing is an intermediate solution
between this packing and the optimal packing, in which there are
almost $\alpha \mls$ bins covered by one large item and some small
items.

For the following definition, it may be helpful to confer with
Figure~\ref{fig:optvsdes}.

\begin{definition}
\label{def:desirable}
  A packing is $(\alpha,\eps)$-\emph{desirable}, where $\alpha$ is a
  real number in the range $(0,1]$ and $\eps$ is a (small) positive real, if
  and only if all the following hold:
\begin{itemize}
\item[I] The packing has at least $\lfloor \alpha \mls \rfloor$ LS-bins.
  All LS-bins, except possibly $O(\eps\mls)$ of them, are
  covered.
\item[II] The large items that are not packed in LS-bins appear in
  pairs, with each pair covering one bin (except one item when there
  are an odd number of such large items).
\item[III] The packing has at
  least $\frac{2\ms}{3} - O(\eps\mls)$ covered S-bins.
\end{itemize}  
\end{definition}

\begin{lemma}\label{lem:alphades}
  For any input sequence, $\sigma$, the number of bins covered in an
  $(\alpha,\eps)$-desirable packing is at least
  $ \min \{ \frac{\alpha + 2\beta -1}{2\beta} ,
  \frac{2}{3} \} \cdot \opt(\sigma) - O(\eps\mls)$.
\end{lemma}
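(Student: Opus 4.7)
The plan is to tally the covered bins in a desirable packing according to the three properties in Definition~\ref{def:desirable} and then compare the total to $\opt(\sigma) = \mls + \mll + \ms = \beta \mls + \ms$. The key identity driving the argument is that the optimal packing uses exactly $\mls + 2\mll$ large items, and in any desirable packing these large items are either placed one-per-LS-bin (property~II allows no other configuration) or paired up to cover a bin. So once we fix how many LS-bins there are, the count of covered ``LL-type'' bins is determined up to a small error.

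Concretely, I would let $L$ denote the actual number of LS-bins in the desirable packing, noting $L \geq \lfloor \alpha\mls\rfloor$ by property~I, and that at least $L - O(\eps\mls)$ of them are covered. The remaining $\mls + 2\mll - L$ large items come in pairs by property~II, contributing at least $(\mls + 2\mll - L)/2 - 1$ covered bins. Summing these two contributions yields $L/2 + (\mls + 2\mll)/2 - O(\eps\mls)$ covered bins involving large items; this expression is increasing in $L$, so substituting $L = \lfloor \alpha\mls\rfloor$ and using $\mll = (\beta-1)\mls$ gives the lower bound
\[
\frac{(\alpha + 2\beta - 1)\mls}{2} - O(\eps\mls)
\]
for covered bins containing large items. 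Property~III contributes a further $\tfrac{2\ms}{3} - O(\eps\mls)$ covered S-bins, so the total count of covered bins is at least
\[
\frac{(\alpha + 2\beta - 1)\mls}{2} + \frac{2\ms}{3} - O(\eps\mls).
\]

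To finish, I would observe that this lower bound can be rewritten as $x \cdot (\beta\mls) + y \cdot \ms - O(\eps\mls)$ with $x = (\alpha + 2\beta - 1)/(2\beta)$ and $y = 2/3$, and that $\opt(\sigma) = \beta\mls + \ms$. Since a weighted average of two nonnegative numbers is at least the smaller one, we obtain
\[
x(\beta\mls) + y\,\ms \geq \min\{x,y\}\cdot(\beta\mls + \ms) = \min\{x,y\}\cdot \opt(\sigma),
\]
which gives exactly the claimed bound. The argument is essentially bookkeeping; the only subtlety is making sure that the additive losses from floors (at most $1$ from the pairing step, plus $1$ from $\lfloor \alpha\mls\rfloor$) and the $O(\eps\mls)$ terms in properties~I and~III combine cleanly into a single $O(\eps\mls)$ error, which is valid because $\mls \geq 1$ is assumed in the case of interest, so constants are dominated.
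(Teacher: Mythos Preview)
Your proof is correct and follows essentially the same approach as the paper: both count covered LS-bins, paired large items, and S-bins via the three properties, simplify to $\frac{(\alpha+2\beta-1)\mls}{2} + \frac{2\ms}{3} - O(\eps\mls)$, and then compare to $\opt(\sigma)=\beta\mls+\ms$ using the elementary inequality $x\cdot c + y\cdot d \geq \min\{x,y\}(c+d)$ (stated in the paper as $\frac{a+b}{c+d}\geq\min\{a/c,b/d\}$). The absorption of floor constants into the $O(\eps\mls)$ term is the same mild abuse the paper makes.
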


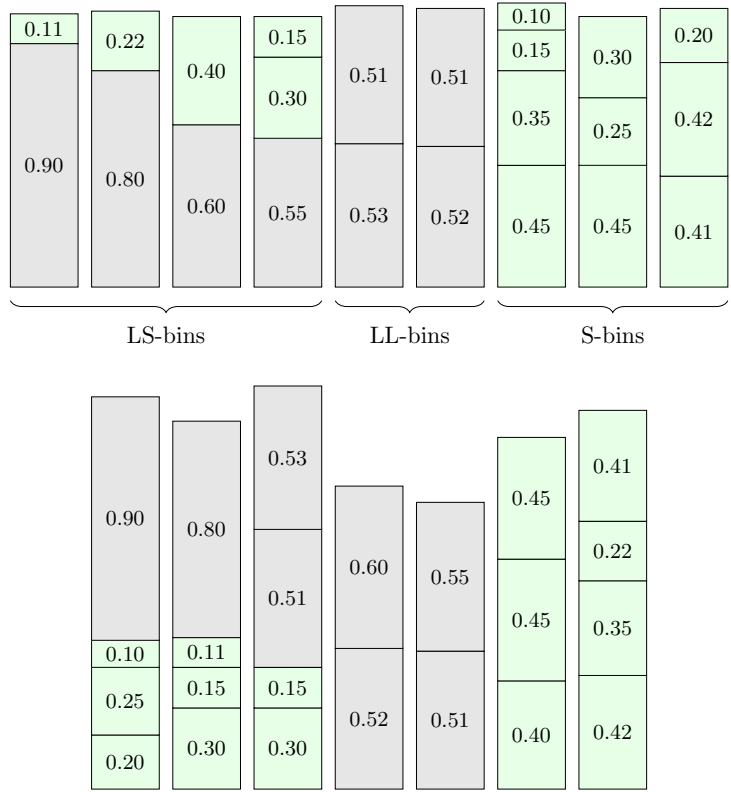
\begin{figure}[!t]
  \begin{center}
\scalebox{0.9}{
\begin{tikzpicture}[scale=0.04]
  \dgb{  0}{ 0}{90}{0.90}
  \lfb{  0}{90}{11}{0.11}
  \dgb{ 30}{ 0}{80}{0.80}
  \lfb{ 30}{80}{22}{0.22}
  \dgb{ 60}{ 0}{60}{0.60}
  \lfb{ 60}{60}{40}{0.40}
  \dgb{ 90}{ 0}{55}{0.55}
  \lfb{ 90}{55}{30}{0.30}
  \lfb{ 90}{85}{15}{0.15}
  \dgb{120}{ 0}{53}{0.53}
  \dgb{120}{53}{51}{0.51}
  \dgb{150}{ 0}{52}{0.52}
  \dgb{150}{52}{51}{0.51}
  \lfb{180}{ 0}{45}{0.45}
  \lfb{180}{45}{35}{0.35}
  \lfb{180}{80}{15}{0.15}
  \lfb{180}{95}{10}{0.10}
  \lfb{210}{ 0}{45}{0.45}
  \lfb{210}{45}{25}{0.25}
  \lfb{210}{70}{30}{0.30}
  \lfb{240}{ 0}{41}{0.41}
  \lfb{240}{41}{42}{0.42}
  \lfb{240}{83}{20}{0.20}
  \draw [decorate,decoration={brace,amplitude=5pt,mirror},xshift=0pt,yshift=0pt] (0,-5) -- (115,-5) node [midway,yshift=-.5cm] {\small LS-bins};
  \draw [decorate,decoration={brace,amplitude=5pt,mirror},xshift=0pt,yshift=0pt] (120,-5) -- (175,-5) node [midway,yshift=-.5cm] {\small LL-bins};
  \draw [decorate,decoration={brace,amplitude=5pt,mirror},xshift=0pt,yshift=0pt] (180,-5) -- (265,-5) node [midway,yshift=-.5cm] {\small S-bins};
\end{tikzpicture}
}

\bigskip

\raisebox{1.58ex}{
\scalebox{0.9}{
\begin{tikzpicture}[scale=0.04]
  \lfb{  0}{ 0}{20}{0.20}
  \lfb{  0}{20}{25}{0.25}
  \lfb{  0}{45}{10}{0.10}
  \dgb{  0}{55}{90}{0.90}
  \lfb{ 30}{ 0}{30}{0.30}
  \lfb{ 30}{30}{15}{0.15}
  \lfb{ 30}{45}{11}{0.11}
  \dgb{ 30}{56}{80}{0.80}
  \lfb{ 60}{ 0}{30}{0.30}
  \lfb{ 60}{30}{15}{0.15}
  \dgb{ 60}{45}{51}{0.51}
  \dgb{ 60}{96}{53}{0.53}
  \dgb{ 90}{ 0}{52}{0.52}
  \dgb{ 90}{52}{60}{0.60}
  \dgb{120}{ 0}{51}{0.51}
  \dgb{120}{51}{55}{0.55}
  \lfb{150}{ 0}{40}{0.40}
  \lfb{150}{40}{45}{0.45}
  \lfb{150}{85}{45}{0.45}
  \lfb{180}{ 0}{42}{0.42}
  \lfb{180}{42}{35}{0.35}
  \lfb{180}{77}{22}{0.22}
  \lfb{180}{99}{41}{0.41}
\end{tikzpicture}
}
}
\caption{(top) an optimal packing with
  $\mls=4, \mll=2$, and $\ms= 3$;
  (bottom) an $(\alpha,\eps)$-desirable packing with $\alpha = 1/2$ and
  $\eps = 1/64$.}
\label{fig:optvsdes}
\end{center}
\end{figure}

\begin{proof}
  The number of bins covered in the optimal packing is
  $\opt(\sigma) = \mll+ \mls + \ms = \beta \mls + \ms$.

  Now consider the $(\alpha,\eps)$-desirable packing and
  let $x_{\text{LS}}$ be the number of LS-bins in the packing.
  Then by Definition~\ref{def:desirable},
    $x_{\text{LS}} \geq \lfloor \alpha \mls \rfloor$,
  and the number of covered LS-bins is $x_{\text{LS}} - O(\eps\mls)$.
  The number of covered LL-bins is $\lfloor (2\mll+\mls-x_{\text{LS}})/2
  \rfloor = \lfloor ((2\beta-1)\mls-x_{\text{LS}})/2 \rfloor$.
  By Definition~\ref{def:desirable}, the number of covered S-bins is
  at least $\frac{2\ms}{3} - O(\eps\mls)$.
  Hence, the number of bins covered in the
  $(\alpha,\eps)$-desirable packing of $\sigma$ will be at least

\begin{align*}
&    x_{\text{LS}} - O(\eps\mls)
     + \left\lfloor \frac{(2\beta-1)\mls-x_{\text{LS}}}{2} \right\rfloor
     + \frac{2\ms}{3} - O(\eps\mls)\\
& =    \frac12 x_{\text{LS}}
     + \frac{(2\beta-1)\mls}{2}
     + \frac{2\ms}{3}
     - O(\eps\mls)  \\
& \geq \frac{\alpha \mls}{2}
     + \frac{(2\beta-1)\mls}{2}
     + \frac{2\ms}{3}
     - O(\eps\mls)  \\
& = \frac{\alpha \mls / 2 + (2\beta-1)\mls/2 + 2\ms/3}{\beta \mls + \ms} \cdot \opt(\sigma) - O(\eps\mls) \\
& \geq \min\left\{\frac{\alpha+2\beta-1}{2\beta}, \frac{2}{3} \right\} \cdot \opt(\sigma) - O(\eps\mls).
\end{align*}

In the first equality, we simply remove terms that are anyway absorbed
in $O(\eps\mls)$; similar for the first inequality, using the bound
$x_{\text{LS}} \geq \lfloor \alpha \mls \rfloor$, where the constant
coming from lifting the floor is also absorbed.  In the last equality,
we have gathered the first three terms and multiplied with the
identity $\opt(\sigma)/(\beta \mls + \ms)$.  Finally, for the last
inequality, we have used the arithmetic inequality,
$\frac{a+b}{c+d}\geq\min(\frac{a}{c},\frac{b}{d})$, to collect the
$\mls$ and $\ms$ terms, respectively.
\end{proof}

In the remainder of this section, we describe an algorithm that
achieves an $(\alpha,\eps)$-desirable packing for certain values of
$\alpha$ and $\eps$.  Here, $\eps$ is a parameter determined by
the accuracy of the encodings of
approximate values of a few numbers passed to the algorithm. Before
describing these numbers, we explain how the approximate encodings
work.

Our algorithm requires approximate encoding of integers (associated
with the number of certain bins), as well as positive values smaller
than~$1$ (associated with size of items or levels of bins).  Given a
positive integer $x$, we can write the length of the binary encoding
of $x$ in $O(\log \log x)$ bits, using self-delimited encoding as
in~\cite{Komm16}.  The approximate value of $x$ will be represented by
the binary encoding of the length of $x$, plus the $k$ most
significant bits of $x$ after the high-order $1$, with $k =
\ceil{\log\log x}$. Thus we use $O(\log\log n)$ bits in total for
encoding numbers no larger than $n$.  Setting the unknown lower order
bits to zero gives an approximation to $x$ which we denote by
$\rounddown{x}$. We can bound $\rounddown{x}$ as follows: If
$\rounddown{x}=y \cdot 2^\ell$ for some $y$ represented by $k+1$ bits,
where the high-order bit is a one, then $2^k\leq y < 2^{k+1}$. Given
$\rounddown{x}$, the largest $x$ could be is $y \cdot 2^\ell + (2^\ell
-1)$. Thus, $(1-\frac{1}{2^k})x <\rounddown{x} \leq x$.  Similarly,
setting the unknown lower order bits to one gives an approximation to
$x$ which we denote by $\roundup{x}$. We can bound $\roundup{x}$ as
follows: $x \leq \roundup{x} < (1+\frac{1}{2^k})x$.  For a
real-valued, positive $x<1$, we consider its binary representation as
the sum of powers of $1/2$.  To represent $\rounddown{x}$, we write
the first $k = O(\log \log n)$ powers of $1/2$, ignoring any unknown
remaining bits. If these unknown bits are all~$1$, we get
$\roundup{x}$, the value of which is no more than $x+\frac{1}{2^k}$.

We will let $\eps = \frac{1}{2^k} \leq \frac{1}{\log n}$.

In the remaining more technical part of the section, it may be
beneficial to consider that if we had had $O(\log n)$ bits of advice
instead of $O(\log\log n)$, many arguments would be simplified, and it
could be helpful on a first reading to ignore multiplicative terms
such as $1-\eps$ that are there because we know only
approximate as opposed to exact values of the parameters we receive
information about in the advice.

\subsubsection{Outlining the algorithm} 

We call the $\ngood = \floor{\mls/3}$ largest items in the
input sequence \emph{good items} and
let $d = 1 - \sgood$, where \sgood denotes the size of the smallest
good item.  Thus, packing a good item in a bin with level at least
\leveld ensures that the bin is covered.

Our algorithm aims to place a subset of small items into \ngood \emph{reserved bins} in a way that the level of each reserved bin is at least $\leveld$, while the remaining small items will cover almost $2\ms/3$ separate bins. As we will describe later, some of the reserved bins will receive an additional good item and hence become covered. 

At first glance, placing small items to provide the above guarantee does not seem too hard: an optimal packing covers $\ms$ small bins while we need to cover roughly $2\ms/3$ bins, and the remaining small items have total size at least $2\mls/3 \times \leveld$ (because $2/3$ of \opt's LS-bins
have large items of size at most $1-\leveld$) while we want to cover a level $ \ngood \times \leveld \approx (\mls/3) \times \leveld$ (in a sense, the total volume of available small items is at least twice the required volume for achieving the desired level). Ideally, if we can balance the level of small items in the reserved bins, then all levels are close to the average level (roughly $2\leveld$). This can be done via the Dual-Worst-Fit algorithm that places each item in the bin that currently has the minimum level among a predefined number of bins (here \ngood, the number of reserved bins). 
Unfortunately however, the discrepancy between the sizes of the small items can create situations where simply using Dual-Worst-Fit is not sufficient to achieve the desired level for all reserved bins. Consequently, the algorithm is more delicate as we will describe in what follows.

The algorithm receives the approximation \leveldp as advice.
Small items of size at least \leveldp are called {\em black items}, and
items smaller than \leveldp are called {\em white items}.
Let \mlsb denote the number of LS-bins in the optimal packing that
contain a black item.

The following offline scheme illustrates the idea behind our algorithm.
See Figure~\ref{fig:optvsdes2} for an illustration.
\begin{itemize}
\item Open \ngood bins called {\em reserved} bins.
\item Place the $ \nb = \min\{\mlsb,\ngood\}$ smallest black
  items in \nb of these bins, one item per bin.
  These bins are called {\em black bins}.
\item Place white items of total size at least \leveld, but less than
  $\leveld + \leveldp$, in each of the
  remaining $\ml = \ngood - \nb$ reserved bins.
  These bins are called {\em white bins}.
  This is done using Dual-Worst-Fit. 
\item  The remaining small items are packed in S-bins using Dual-Next-Fit.
\item Place a large item in each of the reserved bins, ensuring that at
  least $\alpha \mls$ of them are good items.
  Place an additional large item in each reserved bin without a good item.
  Place the remaining large items pairwise in new bins.
\end{itemize}

\begin{figure}[!t]
  \begin{center}
\scalebox{0.76}{
\begin{tikzpicture}[scale=0.04]
  \dgbgood{  0}{ 0}{90}{0.90}
  \lfb{  0}{90}{10}{0.10}
  \dgbgood{ 30}{ 0}{85}{0.85}
  \lfb{ 30}{85}{16}{0.16}
  \dgbgood{ 60}{ 0}{78}{0.78}
  \lfb{ 60}{78}{14}{0.14}
  \lfb{ 60}{92}{9}{0.09}
  \dgbgood{ 90}{ 0}{77}{0.77}
  \lfb{ 90}{77}{12}{0.12}
  \lfb{ 90}{89}{11}{0.11}
  
   \dgb{ 120}{ 0}{75}{0.75}
  \lfb{ 120}{75}{16}{0.16}
  \lfb{ 120}{91}{8}{0.08}
  \dgb{ 150}{ 0}{74}{0.74}
  \lfbblack{ 150}{74}{26}{0.26}
  \dgb{ 180}{ 0}{73}{0.73}
  \lfb{ 180}{73}{20}{0.20}
  \lfb{ 180}{93}{8}{0.08}
  \dgb{ 210}{ 0}{71}{0.71}
  \lfb{ 210}{71}{23}{0.23}
  \lfb{ 210}{94}{7}{0.07}
  \dgb{ 240}{ 0}{68}{0.68}
  \lfb{ 240}{68}{15}{0.15}
  \lfb{ 240}{83}{10}{0.10}
  \lfb{ 240}{93}{7}{0.07}
  \dgb{ 270}{ 0}{65}{0.65}
  \lfbblack{ 270}{65}{24}{0.24}
  \lfb{ 270}{89}{12}{0.12}
  \dgb{ 300}{ 0}{64}{0.64}
  \lfb{ 300}{64}{22}{0.22}
  \lfb{ 300}{86}{7}{0.07}
  \lfb{ 300}{93}{7}{0.07}
  \dgb{ 330}{ 0}{63}{0.63}
  \lfb{ 330}{ 63}{23}{0.23}
  \lfb{ 330}{ 86}{15}{0.15}
	
\lfbblack{ 360}{ 0}{25}{0.25}
\lfb{ 360}{ 25}{23}{0.23}
\lfb{ 360}{ 48}{20}{0.20}
\lfb{ 360}{ 68}{15}{0.15}
\lfb{ 360}{ 83}{10}{0.10}
\lfb{ 360}{ 93}{9}{0.09}
  
  \draw [decorate,decoration={brace,amplitude=5pt,mirror},xshift=0pt,yshift=0pt] (0,-5) -- (350,-5) node [midway,yshift=-.5cm] {\small LS-bins};
  \draw [decorate,decoration={brace,amplitude=5pt,mirror},xshift=0pt,yshift=0pt] (360,-5) -- (385,-5) node [midway,yshift=-.5cm] {\small S-bin};
\end{tikzpicture}
}

\bigskip

\ \ \ \ \ \raisebox{1.58ex}{
\scalebox{0.76}{
\begin{tikzpicture}[scale=0.04]
  \lfbblack{  0}{ 0}{24}{0.24}
  \dgbgood{0}{24}{90}{0.90}
  \lfbblack{30}{0}{25}{0.25}	    
  \dgbgood{30}{25}{78}{0.78}
  \lfb{ 60}{0}{10}{0.10}
  \lfb{ 60}{10}{14}{0.14}
  \dgb{ 60}{24}{75}{0.75}	
    \dgb{ 60}{99}{68}{0.68}	

  \lfb{ 90}{0}{16}{0.16}
  \lfb{ 90}{16}{9}{0.09}
  \dgb{90}{25}{73}{0.73}
  \dgb{90}{98}{71}{0.71}

  \dgbgood{120}{0}{85}{0.85}
  \dgb{120}{85}{74}{0.74}	
  
  \dgbgood{150}{0}{77}{0.77}	
  \dgb{150}{77}{65}{0.65}

  \dgb{180}{0}{64}{0.64}	
  \dgb{180}{64}{63}{0.63}	
  
  \lfb{210}{0}{12}{0.12}
  \lfb{210}{12}{11}{0.11}
  \lfb{210}{23}{16}{0.16}
  \lfb{210}{39}{8}{0.08}
  \lfbblack{210}{47}{26}{0.26}
  
  \lfb{210}{73}{20}{0.20}
  \lfb{210}{93}{8}{0.08}
  
  \lfb{240}{0}{23}{0.23}
  \lfb{240}{23}{7}{0.07}
  \lfb{240}{30}{15}{0.15}
  \lfb{240}{45}{10}{0.10}
  \lfb{240}{55}{7}{0.07}
  \lfb{240}{62}{12}{0.12}
  \lfb{240}{74}{22}{0.22}
  \lfb{240}{96}{23}{0.23}
  
   \lfb{270}{0}{7}{0.07}
   \lfb{270}{7}{7}{0.07}
   \lfb{270}{14}{15}{0.15}
   \lfb{270}{29}{23}{0.23}
   \lfb{270}{52}{20}{0.20}
   \lfb{270}{72}{15}{0.15}
  \lfb{270}{87}{10}{0.10} 
  \lfb{270}{97}{9}{0.09} 
   
  \draw [decorate,decoration={brace,amplitude=5pt,mirror},xshift=0pt,yshift=0pt] (0,-5) -- (55,-5) node [midway,yshift=-.5cm] {\small black bins};

  \draw [decorate,decoration={brace,amplitude=5pt,mirror},xshift=0pt,yshift=0pt] (60,-5) -- (115,-5) node [midway,yshift=-.5cm] {\small white bins};
  
    \draw [decorate,decoration={brace,amplitude=5pt,mirror},xshift=0pt,yshift=0pt] (0,-25) -- (115,-25) node [midway,yshift=-.5cm] {\small reserved bins};

  \draw [decorate,decoration={brace,amplitude=5pt,mirror},xshift=0pt,yshift=0pt] (210,-5) -- (295,-5) node [midway,yshift=-.5cm] {\small S-bins};

\end{tikzpicture}
}
}
\caption{(top) An optimal packing with 12 LS-bins ($\mls=12$ and hence $\ngood=12/3=4$). The four good items are colored dark gray. The smallest good item has size $0.77$ and hence $d=1-0.77=0.23$. The approximate value of $d$ is encoded in 6 bits, i.e., $d^+ = 1/8+1/16+1/32+1/64 = 0.234375$. Black items are those no smaller than $0.234375$ and they are colored dark green (they are 0.24, 0.25, and 0.26). There are $\mlsb = 2$ black items in LS-bins. (bottom) A packing that covers more bins than half of the optimal packing. There are $\ngood=4$ reserved bins. We have $\nb= \min\{\mlsb=2, \ngood=4\} =2$. So, the two smallest black items (0.24 and 0.25) are placed in the reserved bins (black bins). The other 2 reserved bins (white bins) are covered with small items up to an approximate level $d$ using the Dual-Worst-Fit algorithm. The remaining small items are placed using Dual-Next-Fit in separate S-bins. Assuming that $\alpha=1/6$, there are $\alpha \mls = 2$ reserved bins that receive good items. The remaining large items are all paired in their bins.}
\label{fig:optvsdes2}
\end{center}
\end{figure}
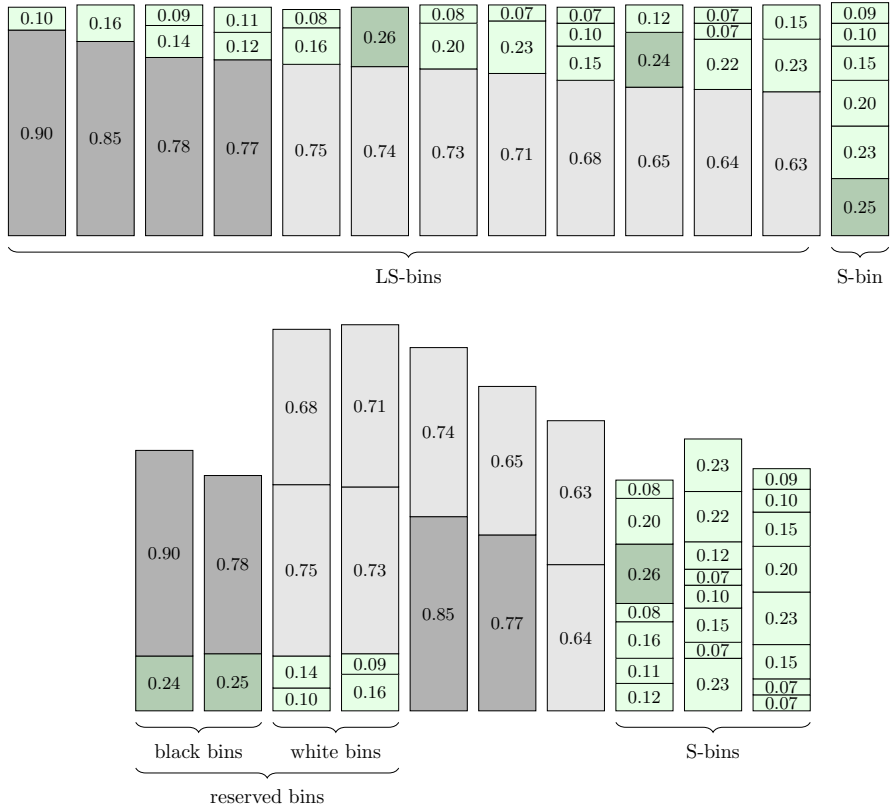

The above algorithm is strictly better than 0.5-competitive. Here is an informal argument for this claim.
The total size of small items placed in S-bins cannot be much smaller than the total size of small items placed in S-bins in an optimal packing. This is due to the following:
  \begin{itemize}
  \item The smallest black items are placed in the reserved bins; the total size of these items is 
  no more than the total size of black items placed with a good item in the LS-bins of \opt.
   \item Each white bin receives white items of total size approximately $\leveld$ (assuming $\ngood>\nb$, i.e., a white bin exists). There are enough white items in the input sequence to ensure this, because 
     \opt covers $\mls - \mlsb$ LS-bins with white items together with a large item. By the assumption that $\ngood>\nb$, $\mlsb = \nb$. Thus, \opt has $\mls - \nb$ LS-bins without a black item. In at most $\mls/3$ of these bins, the total size of white items is less than $\leveld$ (these are bins with a good item). The total size of white items in any of the remaining $2\mls/3-\nb$ bins is at least $\leveld$, that is, the total size of white items in all bins is at least $(2\mls/3-\nb) \times \leveld \geq (\mls/3-\nb) \times 2\leveld$. In the above scheme, white items are placed in $\ngood - \nb \approx \mls/3 - \nb$ bins, that is, on average, a size of at least $2\leveld$ is available for each white bin. Since Dual-Worst-Fit tends to balance the load of white items among the bins, the level of each bin will not be less than $\leveld$, which is half of the average available size (it will be formally proved later).
\end{itemize}
For bins where we use Dual-Next-Fit, each bin has a level less
than~3/2.  Thus, the number of the non-reserved bins covered by small
items is roughly lower-bounded by $2\ms/3$.  Reserved bins that
received a good item are all covered; so, at least $\alpha \mls$ bins
are covered by a large and some small items. Finally, other large
items are placed pairwise in new bins.  We conclude that the resulting
packing is $(\alpha,\epsilon)$-desirable and hence (for $\alpha > 0$
and $\eps < \frac14$) strictly better than 0.5-competitive by
Lemma~\ref{lem:alphades}.

Unfortunately however, using only $O(\log\log n)$ bits of advice, the oracle cannot
give the exact values of \leveld and \mlsb nor identify the \mlsb smallest
black items.
Thus, we make a number of adjustments to the above algorithm as
described in the two following subsections, resulting in a value of
$\alpha$ smaller than $1/15$.

\subsubsection{Placing the small items}

Let \sblack denote the size of the \nb'th smallest black item in the
input sequence.
Let \eblack denote the number of items among the \nb smallest black
items which have sizes in the range $(\sblackm,\sblackp]$. 	
Let \nblack denote the number of black items no larger than \sblackm
and let $\mblack = \nblack + \eblackm$.
The numbers \sblackm, \eblackm, and \mblackm are given as advice, and
the algorithm opens \mblackm reserved black bins. This ensures that the number of
black items used in the algorithm's reserved bins is close to \nb, but not
larger than \nb. 
Each black item of size at most \sblackm is packed in a black bin.  These black items are among the \nb smallest black items. 
Each black item with a size in $(\sblackm,\sblackp]$ is also packed in a
black bin, if less than \eblackm black items in the size range
$(\sblackm,\sblackp]$ have already been placed in black bins.
These black items might not be among the \nb smallest black items but are at most a factor $(1+\epsilon)$ larger. So, $\mblackm$ reserved bins are covered by black items that are not much larger than the smallest $n_B$ black items. 

Let $\ml = \ngood - \nb$.
The number \mlm is given as advice, and the algorithm opens \mlm 
reserved white bins.
As long as the total size of white items received so far is at most
$2\leveldp \mlm$, white items are packed in the white
bins using Dual-Worst-Fit.
The remaining small items are packed in S-bins using Dual-Next-Fit.

We let $\mres = \mblackm+\mlm$ denote the total number of reserved bins.
Note that
\begin{align}
\label{eq:mres}
\mres
& = \mblackm+\mlm \nonumber \\
& = \rounddown{(\nblack + \eblackm)} + \rounddown{(\ngood - \nb)} \nonumber \\
& > (1-\eps)(\nblack + \eblackm) + (1-\eps)(\ngood - \nb) \nonumber \\
& > (1-\eps)^2 \nb + (1-\eps)(\ngood - \nb) \nonumber \\
& > (1-\eps)^2 \ngood
\end{align}

\begin{lemma}\label{lem:goodreseved}
  A reserved bin that includes any good item will be covered in the
  final packing of the algorithm.
\end{lemma}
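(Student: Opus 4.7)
The plan is to verify that any reserved bin $b$ whose contents include a good item has final level at least $1$. Since every good item has size at least $\sgood = 1 - \leveld$, it suffices to show that the non-good contents of $b$ contribute at least $\leveld$ to its level; then adding the good item gives level at least $\leveld + (1-\leveld) = 1$.

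For the black-bin case, this is immediate from the algorithm's construction. Each reserved black bin receives a black item, whose size is at least $\leveldp$ by definition of ``black''. Since $\leveldp \geq \leveld$, the bin's level is already at least $\leveld$ after the black item is placed, so any good item subsequently placed in it yields total level at least $\leveldp + \sgood \geq \leveld + (1-\leveld) = 1$.

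For the white-bin case, I would first establish the standard Dual-Worst-Fit invariant: throughout the execution, the gap between the maximum and minimum levels among the $\mlm$ white bins is at most the largest white-item size, which is bounded by $\leveldp$. The proof is a one-step induction on item insertions: a new item of size $s$ is placed in the current minimum-level bin, so the new minimum is at least the old minimum, and the level of the just-updated bin is at most the old minimum plus $s \leq \leveldp$; no other bin's level changes, so the invariant is preserved. Consequently, once the cumulative white-item volume packed into white bins is $V$, every white bin has level at least $V/\mlm - \leveldp$. By the algorithm's routing rule, white items continue to be packed into white bins until the cumulative white-bin volume reaches approximately $2\leveldp \mlm$, so by that point every white bin has level at least $2\leveldp - \leveldp = \leveldp \geq \leveld$. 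Adding a good item then brings the level to at least $\leveld + \sgood = 1$.

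The main obstacle is the white-bin case: we must combine the Dual-Worst-Fit invariant with the volume threshold $2\leveldp \mlm$ in a way that is consistent with the online arrival order, and we also need the algorithm's rule for placing large items to guarantee that a good item is only routed to a white bin once the bin has accumulated white items of total size at least $\leveld$. Making this timing argument precise---so that the bin's accumulated small-item level is genuinely available at the moment the good item is placed---is the delicate point.
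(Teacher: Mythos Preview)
Your treatment of black bins is fine and matches the paper. For white bins, however, your argument only covers one of two cases and therefore has a genuine gap.

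You argue via the Dual-Worst-Fit gap invariant that once the cumulative white volume packed in the white bins reaches the threshold $2\leveldp\mlm$, every white bin has level at least $\leveldp \geq \leveld$. That part is correct and corresponds to the paper's case~(a). What you never address is the possibility that the input simply does not contain enough white items to reach the threshold at all---the algorithm's stopping rule has two exits, and you implicitly assume the volume exit is always taken. The paper handles this second case~(b) separately: under the contradiction hypothesis that no white item is ever placed in a bin already at level $\geq \leveld$, it derives the per-bin upper bound $<2\leveld+e_i$ for the $\ell$ bins whose last item has size $\leveld+e_i>\leveld$ and $<2\leveld$ for the rest, summing to $<2\mlm\leveld+\sum e_i$. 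It then lower-bounds the total white volume by $2\mlm\leveld+\sum e_i$ using the structure of \opt's LS-bins (at least $2\mlm$ of them are covered by white items of total size $\geq\leveld$), yielding the contradiction.

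Note also that your invariant alone would \emph{not} close case~(b) even if you supplied the volume lower bound from \opt. Your bound gives only $\min \geq V/\mlm - \leveldp$; combined with $V \geq 2\mlm\leveld + \sum e_i$ this yields $\min \geq \leveld + \sum e_i/\mlm - \max_i e_i$, which can fall below $\leveld$ (e.g., a single white item of size just under $\leveldp$). The paper's sharper per-bin bound, which tracks the large white items individually and uses that each such item must be the \emph{last} item in its bin under the contradiction hypothesis, is exactly what makes the two bounds meet. Finally, the ``timing'' issue you flag is not the obstacle: the lemma is about the final packing, so the order in which good items and white items arrive is immaterial to whether the white-item level in each white bin reaches $\leveld$.
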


\begin{proof}
Since any good item has size at least $1-\leveld$, we just need to
prove that each reserved bin receives small items of total size at
least \leveld.
Since $\leveldp \geq \leveld$, the claim is true for all black bins.
Thus, we only need to consider the white bins.
For each white bin, we let the {\em white level} of the bin denote the
total size of white items packed in the bin.

Recall that Dual-Worst-Fit places each item in a bin with minimum white
level.
This means that, if a white item is ever placed in a bin that already
has a white level of at least \leveld, all white bins have a white
level of at least \leveld.
Assume for the sake of contradiction that no white item is
placed in a bin that already has a white level of at least \leveld.

The algorithm stops packing white items in the
white bins when 
\begin{itemize}
\item[(a)] we have reached the upper bound of $2\leveldp \mlm$ total
  size, or
\item[(b)] there are no more white items in the sequence,
\end{itemize}
whichever happens first.
In both cases, we arrive at a contradiction.

We first consider (a), i.e., we assume that white items of total size
at least $2\leveldp \mlm$ are packed in the white bins.
By the assumption that no white item is placed in a bin that already
has a white level of at least \leveld, the final white level of each
of the \mlm white bins is less than $\leveld + \leveldp$, a
contradiction.

Now, consider (b), i.e., assume that all white items of the input
sequence are packed in white bins.
Assume that exactly $\ell$ white items packed in white bins are larger than
\leveld, and let $\leveld+e_1,\ldots,\leveld+e_{\ell}$ be the sizes of
these items.
By the assumption that no white item is placed in a bin with a white
level of at least~$\leveld$, 
these $\ell$ items are packed in bins
with a final white level less than $2\leveld+e_i$, $1 \leq i \leq
\ell$, respectively.
By the same assumption, the remaining $\mlm-{\ell}$ white bins each
have a white level less than $2\leveld$.
Thus, the total size of white items is
\begin{align}
\label{eq:whitesize}
  \sizel < (\mlm-\ell) \cdot 2\leveld + \sum_{i=1}^{\ell} (2d+e_i)
             = \mlm \cdot 2 \leveld + \sum_{i=1}^{\ell} e_i
\end{align}

\opt covers \mls bins, each with a large item and some small items.
Out of the \mls large items used for these bins, fewer than \ngood
items are larger than
$1-d$ (by the definition of \leveld). 
Thus, at least $\mls-\ngood \geq 2\ngood$ of these large
items have size at most $1-d$, and at most $\nb$ of them are combined
with a black item in the optimal packing, as $\nb \geq \mlsb$.
Therefore \opt has at least $2\ngood - \nb \geq 2\mlm$ bins, each with
a total size of white items of at least~$\leveld$.
Hence, the total size of white items is
\begin{align*}
\sizel & \geq (2\mlm - \ell) \cdot \leveld + \sum_{i=1}^{\ell} (d+e_i)
         =  2\mlm \cdot \leveld + \sum_{i=1}^{\ell} e_i\,,
\end{align*}
contradicting Equation~(\ref{eq:whitesize}).
\end{proof}

We now prove that the algorithm fulfills Property~III of an
$(\alpha,\eps)$-desirable packing:

\begin{lemma}
\label{lemma:sbins}
The algorithm covers at least $\frac23\ms- O(\eps\mls)$ S-bins.
\end{lemma}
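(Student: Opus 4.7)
The plan is to bound from below the total size of small items that reach the S-bin stream, and then apply the Dual-Next-Fit guarantee to convert that size into a count of covered S-bins. Any bin closed by Dual-Next-Fit has level strictly less than $3/2$, because the closing item is a small item (size $<1/2$) and the level just before closing was below $1$. Hence if $T$ denotes the total size of small items that the algorithm feeds into S-bins, the number of covered S-bins is at least $\lceil(2T-2)/3\rceil \geq \frac{2T}{3} - O(1)$, and it suffices to show $T \geq \ms - O(\eps\mls)$.

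Write $T = S - R$, where $S$ is the total size of all small items in the input and $R$ is the total size of small items placed in reserved bins. Since each of \opt's $\ms$ S-bins is covered by small items only, $S \geq \ms + L^{*}$, where $L^{*}$ denotes the total size of small items that \opt places in its LS-bins. The task therefore reduces to proving $R \leq L^{*} + O(\eps\mls)$.

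I would split $R$ into its black-bin and white-bin components. For the black bins, the algorithm packs at most $\mblack = \nblack + \eblackm \leq \nb$ items of size at most $\sblackp \leq (1+\eps)\sblack$; an item-by-item comparison with the $\mblack$ smallest black items in the input shows that the algorithm's black-bin total exceeds the total of those smallest $\mblack$ items by at most $\eblackm(\sblackp - \sblackm) = O(\eps\mls)$, because the first $\nblack$ items coincide while the remaining $\eblackm$ items differ in size by at most $\sblackp - \sblackm$. Since $\nb = \min(\mlsb,\ngood) \leq \mlsb$, the $\mlsb$ black items that \opt uses in LS-bins have total size at least that of the smallest $\mblack$ items. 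For the white bins, the algorithm caps the white content at $2\leveldp\mlm + \frac{1}{2} \leq 2\leveld\mlm + O(\eps\mls)$. Exactly as in the derivation inside the proof of Lemma~\ref{lem:goodreseved}, \opt has at least $\mls - \mlsb - g$ LS-bins that contain neither a black item nor a good item, each contributing at least $\leveld$ in white items, where $g \leq \ngood$ counts the good items serving as the large item of an LS-bin. Combined with $3\ngood \leq \mls$, this gives $\mls - \mlsb - \ngood \geq 2(\ngood - \mlsb) = 2\mlm$ whenever $\mlsb \leq \ngood$, so \opt's white content in LS-bins is at least $2\mlm\leveld$, matching the algorithm's cap. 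In the boundary case $\mlsb > \ngood$ we have $\mlm = 0$, so the white contribution vanishes and the black-bin estimate alone suffices.

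Summing the two estimates gives $R \leq L^{*} + O(\eps\mls)$, hence $T \geq \ms - O(\eps\mls)$ and the number of covered S-bins is at least $\frac{2}{3}\ms - O(\eps\mls)$, as required. The main obstacle will be the careful accounting of the $\eps$-errors introduced by each of $\sblackp$, $\leveldp$, $\mblackm$, and $\mlm$ in place of the exact values, since each appears as a factor against a quantity of size $\Theta(\mls)$ and the accumulated slack has to remain within $O(\eps\mls)$ in both regimes $\mlsb \leq \ngood$ and $\mlsb > \ngood$.
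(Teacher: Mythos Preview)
Your proposal is correct and follows essentially the same argument as the paper: bound the small mass diverted to reserved bins by comparing the algorithm's black and white reserved content separately against \opt's black and white LS-bin content (the white lower bound $2\leveld\mlm$ being exactly the estimate from the proof of Lemma~\ref{lem:goodreseved}), conclude that at least $\ms - O(\eps\mls)$ small mass reaches the S-bin stream, and finish with the $3/2$-level bound for Dual-Next-Fit. One small slip: in your white-bin count you write $2(\ngood - \mlsb) = 2\mlm$, but $\ngood - \mlsb = \ml$ while $\mlm = \rounddown{\ml}$, so this should read $2(\ngood-\mlsb) = 2\ml \geq 2\mlm$; the inequality goes the right way, so the conclusion is unaffected.
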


\begin{proof}
We first show that the total size of small items packed in S-bins in
the algorithm's packing is at least $\ms- O(\eps\mls)$.  The total size
of items that \opt packs in S-bins is at least \ms.  Thus, since all
small items that are not packed in reserved bins are packed in S-bins,
it is sufficient to argue that the total size of small items packed in
the reserved bins is at most a factor of $O(1+\eps)$ larger than the
total size of the small items packed in the LS-bins of \opt.

As argued in the proof of Lemma~\ref{lem:goodreseved}, \opt packs
white items of total size at least $2\leveld\mlm$ in its LS-bins.
Since our algorithm stops packing white items after total size at least
$2\leveldp\mlm$ has been reached, it packs white items of less than $2\leveldp\mlm +\leveldp$ total
size in its reserved bins. Thus, the algorithm uses less than
\begin{align*}
2\leveldp\mlm+\leveldp - 2\leveld\mlm
  & < 2(1+\eps)\leveld\mlm - 2\leveld\mlm + \leveldp \\
  & = 2\eps\leveld\mlm +\leveldp \\
  & < 2\eps \mlm +\leveldp \\
  & \leq 2\eps \mls +\leveldp
\end{align*}
more total size of white items for its reserved bins as does \opt for
its LS-bins.

The algorithm packs no more black items in its reserved bins than does
\opt in its LS-bins.
Furthermore, these items are not much larger than those \opt uses in LS-bins
since at most \eblackm  black items in the size interval $(\sblackm,\sblackp]$
are used in those reserved bins.
Thus, since the black items no larger than \sblackm are the smallest black
items in the input, the total size of black items packed in the
reserved bins is at most $1+\eps$ times the total size of items
packed in \opt's LS-bins.

This proves that the algorithm packs small items of total size at
least $\ms- O(\eps\mls)$ in its S-bins.
Since no items larger than $1/2$ are packed in the S-bins, no bin gets
filled to more than $3/2$.
This means that at least $2\ms/3 - O(\eps\mls)$ bins are covered.
\end{proof}

\subsubsection{Placing the large items}
In order to achieve an $(\alpha,\eps)$-desirable packing, our
algorithm needs to select approximately $\alpha \mls \leq \mres$ good
items, and place them in the reserved bins;
Lemma~\ref{lem:goodreseved} guarantees that these bins will be covered
(Property~I holds).  Note that Lemma~\ref{lemma:sbins} establishes
Property~III.  Meanwhile, as we will describe, the algorithm ensures
that other large items are paired and hence each pair of them covers a
bin (Property~II holds).

In order to pack at least $\alpha \mls - O(\eps\mls)$ good items in the
reserved bins, the algorithm considers three cases depending on the
location of good items in the sequence. Advice will be used to select
the correct case.
Furthermore, \mlsm will be given as advice.

\begin{lemma}
\label{smallbeta}
For $\beta<15/14$, there exists an online algorithm using $O(\log\log
n)$ advice bits and producing an $(\alpha, \eps)$-desirable packing
for
$\alpha \leq \frac{(1-\eps)^2(\frac76-\beta)}{1 +
  \frac32(1-\eps)^2(1+\eps)}$ and $\eps \in O(1/(\log n))$.
\end{lemma}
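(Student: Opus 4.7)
Lemma~\ref{lem:goodreseved} and Lemma~\ref{lemma:sbins} already establish Property~III of $(\alpha,\eps)$-desirability and ensure that every reserved bin receiving a good item is covered; what remains is to describe an online rule for routing the \emph{large} items which, using only $O(\log\log n)$ advice bits, places at least $\lfloor\alpha\mls\rfloor$ good items into reserved bins (yielding Property~I via Lemma~\ref{lem:goodreseved}) and pairs the remaining large items into covered LL-bins (Property~II).

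My plan is to have the oracle transmit, in addition to the values $\mblackm$, $\mlm$, $\leveldp$, $\sblackm$, $\sblackp$, and $\eblackm$ used for the small items, the approximation $\mlsm$ together with a constant number of case-selection bits. Since good items are precisely the $\ngood$ largest items, every good item has size at least $\sgood \geq 1-\leveld \geq 1-\leveldp$; using this threshold, the algorithm can tag each arriving large item as a \emph{candidate} good item online. The case-selection bits pick one of three strategies that differ in which candidates are forwarded to reserved bins versus which are paired off immediately, chosen so as to collectively cover the qualitatively different positional distributions of genuine good items among the candidates in $\sigma$. For every input, at least one strategy routes at least $\lfloor\alpha\mls\rfloor - O(\eps\mls)$ true good items into reserved bins, and the oracle signals the index of such a strategy.

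The bound on $\alpha$ then follows by a counting argument. There are $(2\beta-1)\mls$ large items in total, of which $\ngood = \lfloor\mls/3\rfloor$ are good; there are $\mres \geq (1-\eps)^2 \ngood$ reserved bins by Equation~\eqref{eq:mres}. Balancing the number of candidates routed to reserved bins against the number of remaining large items required to form pairs for the LL-bins, and accounting for the multiplicative $(1\pm\eps)$ slack introduced by each of the approximations $\mblackm$, $\mlm$, and $\mlsm$, yields the ratio $(1-\eps)^2(\tfrac{7}{6}-\beta) / \bigl(1 + \tfrac{3}{2}(1-\eps)^2(1+\eps)\bigr)$ as an achievable value of $\alpha$; this value is positive whenever $\beta < 7/6$, which is implied by the hypothesis $\beta < 15/14$.

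The main obstacle will be the case analysis itself: showing that three (rather than, say, $\Theta(\log n)$) strategies truly suffice. Because each must be describable with only $O(1)$ bits of case selection, the strategies have to be individually simple yet jointly robust against any positional arrangement of good items within $\sigma$. Pinning down the exact constants $\tfrac76$ and $\tfrac32$ in the formula requires a careful worst-case accounting of how many good items each strategy can miss, and of how the $O(\eps\mls)$ error budget built into $(\alpha,\eps)$-desirability absorbs the discrepancy between the set of candidates and the set of true good items, as well as the at most one unpaired large item left at the end of the pairing process.
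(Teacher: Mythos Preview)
Your high-level plan---three strategies indexed by $O(1)$ advice bits, selected according to where the good items lie in the sequence---is in the same spirit as the paper's, but the specific mechanism you propose has a genuine gap. You tag each large item of size at least $1-\leveldp$ as a ``candidate'' good item and assert that the discrepancy between candidates and true good items is absorbed by the $O(\eps\mls)$ error budget. This is false: $1-\leveldp$ is a \emph{lower} bound on $\sgood$, so every good item is a candidate, but nothing bounds the number of non-good candidates. If all large items have the same size $s$, then $\sgood=s$ and $1-\leveldp\leq s$, so all $(2\beta-1)\mls$ large items are candidates while only $\ngood\approx\mls/3$ are good; the discrepancy is $\Theta(\mls)$ and the candidate tag is vacuous. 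Routing a candidate to a reserved bin therefore gives no covering guarantee, since a non-good candidate can have size strictly below $\sgood=1-\leveld$ while a reserved bin is only guaranteed level $\leveld$.

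The paper uses no size threshold for large items. Each of its three cases deposits an entire positional \emph{batch} of consecutive large items into the reserved bins (the first $\mres$ large items, the second $\mres$, or a prefix of length $F=\mllp+\lfloor\mres/2\rfloor+\lfloor\alpha\mlsm/2\rfloor-1$ of the remainder) and only \emph{after} the batch is fully placed selects its $\lfloor\alpha\mlsm\rfloor$ largest members to keep unpaired. The oracle's case bits certify that the chosen batch contains at least that many genuine good items, so the selected items necessarily have size at least $\sgood$ and Lemma~\ref{lem:goodreseved} applies directly; the unselected batch members are then paired with subsequent large items. The constants $\tfrac{7}{6}$ and $\tfrac{3}{2}$ in the bound on $\alpha$ come specifically from Case~3, where one must verify $F\leq\mres$, prove that at least $\alpha\mls-O(1)$ good items are forced into the length-$F$ prefix given that fewer than $2\lfloor\alpha\mlsm\rfloor$ appeared in the first two batches, and check that all but $O(\eps\mls)$ of the unselected items in that prefix can still be paired with the $L-F$ later large items. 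These are explicit calculations rather than a generic balancing argument.
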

\begin{proof}
  By Lemma~\ref{lemma:sbins}, Property~III holds.

  To prove that Property~I holds, we first argue that the value of
  $\alpha\mls$ is not larger than the number of reserved bins.  By
  Equation~(\ref{eq:mres}), the number of reserved bins is
  $$\mres > (1-\eps)^2\floor{ \frac{\mls}{3} } > \frac34 \floor{
  \frac{\mls}{3} }, \text{ for } \eps \leq \frac18.$$
  Since $\beta \geq 1$, we get
  $$\alpha \leq \frac{(1-\eps)^2(\frac76-\beta)}{1 +
  \frac32(1-\eps)^2(1+\eps)} < \frac{7-6\beta}{6/(1-\eps)^2 + 9(1+\eps)} < \frac{7-6\beta}{15} < \frac{1}{15}.$$

  Large items are placed in a way to establish Properties~I and~II,
  following a case analysis based on where good items appear in the
  request sequence (the appropriate case is encoded using advice).
  
  \textbf{Case 1:} Assume there are at least $\floor{\alpha \mlsm}$ good items
  among the first $\mres$ large items in the sequence.

  In this case, the algorithm places the first \mres large items into
  the reserved bins.
  After seeing all these \mres items, the algorithm chooses the
  largest $\lfloor \alpha \mlsm \rfloor$ of them and declares them to
  be good items, which by Lemma~\ref{lem:goodreseved} are guaranteed
  to be covered.
  Thus, Property~I holds.
  
  The remaining $\mres - \lfloor \alpha \mlsm \rfloor$ large items in the
  reserved bins will be paired with forthcoming large items. Since
  there are at least $\mls-\mres \geq 2\mls/3$ forthcoming large items and fewer
  than $\mls/3$ large items in the reserved bins waiting to be paired,
  all these large items (except possibly one) can be paired
  (Property~II holds).  In summary, in the final covering, there are
  $\lfloor \alpha \mlsm
  \rfloor$ bins covered by a large item (and some small items) while
  the remaining large items are paired (except possibly one).  Hence,
  Property~II holds.

  \textbf{Case 2:} Assume there are fewer than
  $\floor{\alpha \mlsm}$ good items among the
  first \mres large items in the sequence (Case 1 does not
  apply). Furthermore, assume there are
  at least $\floor{\alpha \mlsm}$ good items among the
  \mres large items that follow the first
  \mres large items.

  In this case, the algorithm places the first \mres large items
  pairwise in $\lceil \mres/2 \rceil$ bins that are not reserved. The
  \mres large items that follow are placed in the reserved bins. After
  placing the last of these items in the reserved bins, the algorithm
  considers these \mres items and declares the $\lfloor \alpha \mlsm
  \rfloor$ largest to be good items, which by
  Lemma~\ref{lem:goodreseved} are guaranteed to be covered.  Thus,
  Property~I holds.

  The
  remaining $\mres- \floor{\alpha \mlsm}$ reserved bins (with
  large items) will need to be covered by forthcoming large items. We
  know there are at least $\mls-2\mres \geq \lfloor \mls/3 \rfloor$ forthcoming
  large items and fewer than $\lfloor \mls/3 \rfloor$ large items in
  reserved bins waiting to be paired, so all these large items (except
  possibly one) can be paired.  Thus, Property~II holds.

  \textbf{Case 3:} Assume there are fewer than
  $\floor{\alpha \mlsm}$ good items among the
  first \mres large items and also fewer than
  $\floor{\alpha \mlsm}$ good items among the
  following \mres items (Cases~1 and~2 do not apply). 

  In this case, the algorithm places the first $2\mres$ large items in
  pairs.  There are
  $L = 2\mll+\mls-2\mres$ remaining
  large items.  The algorithm places the first
  $F=\roundup{\mll}+ \lfloor \mres/2 \rfloor + \lfloor \alpha
    \mlsm/2 \rfloor - 1 $ of the last $L$ large items in the
  reserved bins (note that this is roughly half of the last $L$ large
  items when $\alpha$ is small).  For this to be possible, we show
  first that the number of reserved bins is at least $F$, i.e.,
  $F-\mres \leq 0$.
\begin{align*}
  F - \mres
  &   =  \roundup{\mll}+ \left\lfloor \frac{\mres}{2} \right\rfloor + \left\lfloor \frac{\alpha \mlsm}{2} \right\rfloor - 1 - \mres \\
  & \leq \roundup{\mll} - \frac{\mres}{2} + \frac{\alpha
    \mlsm}{2} - 1 \\ 
  &   <  \roundup{\mll} - (1-\eps)^2\frac{\ngood}{2} + \frac{\alpha
    \mlsm}{2} - 1, \text{ by Equation~(\ref{eq:mres})} \\ 
  &   <  \roundup{\mll} - (1-\eps)^2 \frac{\mls}{6}
         + \frac{\alpha \mlsm}{2} \\ 
  &   < (1+\eps)\mll - (1-\eps)^2\frac{\mls}{6} + \frac{\alpha
    \mls}{2} \\ 
  &   = (1+\eps)(\beta-1)\mls - (1-\eps)^2\frac{\mls}{6} + \frac{\alpha
    \mls}{2} \\ 
  &   =  \left((1+\eps)(\beta-1) - \frac{(1-\eps)^2}{6} + \frac{\alpha}{2} \right) \mls \\
  &   <  \left(\frac{1+\eps}{14} - \frac{(1-\eps)^2}{6}+\frac{1}{30}
         \right) \mls
         \\
         &   < 0, \text{ for } \eps \leq \frac18 ,
\end{align*}
The second to last equality follows since $\mll = (\beta-1)\mls$, and
the second to last inequality follows from $\beta < 15/14$ and $\alpha
< 1/15$.

This establishes that there are enough reserved bins.

Next, we show that at least $\alpha \mls - 6$ of the $F$ items placed in
the reserved bins are good items.  There are fewer than
$2 \floor{\alpha \mlsm}$ good items among the first
$2\mres$ large items. Thus, the number of good items among the last $L$
large items is more than
$\ngood - 2\lfloor\alpha \mlsm\rfloor$.  Even
if all items among the last $L-F$ large items are good, there are
still more than
$ G = \ngood - 2 \floor{\alpha \mlsm} - (L-F)$ good items that are placed in the reserved bins.  In the
following we deduce that $G > \alpha \mls -6$:
\begin{align*}
  G
  & = \ngood - 2 \floor{\alpha \mlsm} - L + F\\
  & = \ngood - 2 \left\lfloor \alpha \mlsm\right\rfloor - \left( 2\mll+\mls-2\mres \right) +  \roundup{\mll}+\left\lfloor \frac{\mres}{2}\right\rfloor +\left\lfloor  \frac{\alpha\mlsm}{2} \right\rfloor -1 \\
  & > \ngood - 2 \alpha \mlsm - \left( 2\mll+\mls- 2\mres \right) + \roundup{\mll}+ \frac{\mres}{2} +  \frac{\alpha\mlsm}{2}-3 \\
  & = \ngood - \frac32 \alpha \mlsm - 2\mll-\mls+ \frac52 \mres + \roundup{\mll} -3 \\
  & > \ngood - \frac32 \alpha \mlsm - 2\mll-\mls+ \frac52 (1-\eps)^2 \ngood + \roundup{\mll} -3 \\
  & > \frac72 (1-\eps)^2 \ngood - \frac32 \alpha \mlsm - \mll-\mls -3 \\
  & > \frac76 (1-\eps)^2 \mls - \frac32 \alpha \mls - \mll-\mls -6 \\
  & = \frac76 (1-\eps)^2 \mls - \frac32 \alpha \mls - \beta \mls -6 \\
  & > (1-\eps)^2 \left( \frac 76 - \frac32 \alpha- \beta \right) \mls - 6
\end{align*}
The second to last inequality follows since
$\ngood = \floor{\frac{\mls}{3}} \geq \frac{\mls}{3}-\frac23$.

Since we assumed 
$\alpha \leq \frac{(1-\eps)^2(\frac76-\beta)}{1 + \frac32(1-\eps)^2(1+\eps)}$, we can write
$$\beta \leq \frac76 - \frac{1 + \frac32(1-\eps)^2(1+\eps)}{(1-\eps)^2} \alpha =
  \frac76 - \frac{\alpha}{(1-\eps)^2} -  \frac32(1+\eps)\alpha ,$$
and
\begin{align*}
  G & > (1-\eps)^2 \left(  \frac 76 - \frac32 \alpha - \frac76 + \frac{\alpha}{(1-\eps)^2} + \frac32(1+\eps)\alpha \right) \mls - 6\\
    & > \alpha \mls -6 .
\end{align*}

Hence, at least $\lfloor \alpha \mls \rfloor -6$ of the $F$ items placed
in the reserved bins are good items. After placing these $F$ items,
the algorithm declares the largest $\lfloor \alpha \mlsm \rfloor -6$ among
them to be good items. By Lemma~\ref{lem:goodreseved}, these items
(along with small items in the reserved bins) will cover their
respective bins.
Thus, Property~I holds.

There are $F- \lfloor \alpha \mlsm \rfloor + 6$
large items in
reserved bins which have not been declared good, and the $L - F$ large
items which have not arrived at this point will be paired with them.
We prove that, among the $F- \lfloor \alpha \mlsm \rfloor + 6$
large items that are not declared good, only $O(\eps\mls)$ items might not be paired with
one of the remaining $L-F$ large items:
\begin{align*}
& F - \floor{\alpha \mlsm} + 6 \\
& = \mllp+ \left\lfloor\frac{\mres}{2}\right\rfloor+ \floor{ \frac{\alpha \mlsm}{2} } - 1 - \floor{ \alpha \mlsm } + 6 \\
& \leq \mllp+ \left\lfloor\frac{\mres}{2}\right\rfloor - \floor{ \frac{\alpha \mlsm}{2} } + 6\\
& \leq (2\mllp - \mllp) + \left(3\mres-2\mres-\left\lfloor\frac{\mres}{2}\right\rfloor\right)- \floor{ \frac{\alpha \mlsm}{2} } + 6\\
& \leq (2\mll - \mllp) + \left(\mls-2\mres-\left\lfloor\frac{\mres}{2}\right\rfloor\right)- \floor{ \frac{\alpha \mlsm}{2} } + O(\eps\mll)\\
& < \left(2\mll + \mls - 2 \mres\right) - \left(\mllp+\left\lfloor\frac{\mres}{2}\right\rfloor+ \floor{ \frac{\alpha \mlsm}{2}} - 1\right) + O(\eps\mll)\\
& = L-F + O(\eps\mll)\\
& = L-F + O(\eps\mls), \text{ since } \mll=(\beta-1)\mls < \frac{1}{14}\mls
\end{align*}

So, the number of large items in the reserved bins which are not
paired will be $O(\eps\mls)$. The bins in which these unpaired items
are placed, along with the $\lfloor \alpha \mls \rfloor -6$ bins that
include good items, will be the LS-bins in the final
$(\alpha,\eps)$-desirable packing.  All large items placed in bins
other than LS-bins are paired and hence, Property~II also holds.
\end{proof}

\subsection{Wrapping it up}

\begin{theorem}\label{lem:hardcaselog}
  There is an algorithm that, provided with $O(\log \log n)$ bits of
  advice, achieves an asymptotic competitive ratio of at least
  $\frac{8}{15}$.
\end{theorem}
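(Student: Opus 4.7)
The plan is to stitch together the two regimes already analyzed: the easy case of Lemma~\ref{lem:easycaselog} and the hard case of Lemma~\ref{smallbeta}. I would reserve one advice bit that flags whether $\mls=0$ or $\beta\geq 15/14$, in which case the algorithm invokes the advice-free construction of Lemma~\ref{lem:easycaselog} and obtains ratio at least $8/15$ directly, or instead $\mls\geq 1$ and $\beta<15/14$, in which case the algorithm invokes Lemma~\ref{smallbeta}. Only the hard case consumes actual advice, and it uses $O(\log\log n)$ bits, so the overall advice budget is within the bound.

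In the hard case I would pick $\alpha$ as large as Lemma~\ref{smallbeta} permits, namely
\[
\alpha \;=\; \frac{(1-\eps)^2(7/6-\beta)}{1+\tfrac{3}{2}(1-\eps)^2(1+\eps)}, \qquad \eps\in O(1/\log n).
\]
The algorithm can compute $\alpha$ on the fly, since Lemma~\ref{smallbeta}'s construction already communicates approximations of $\mls$ and $\mll$ (hence of $\beta$) through its advice. Feeding the resulting $(\alpha,\eps)$-desirable packing into Lemma~\ref{lem:alphades} gives at least
\[
\min\!\left\{\frac{\alpha+2\beta-1}{2\beta},\ \frac{2}{3}\right\}\opt(\sigma) \;-\; O(\eps\mls)
\]
covered bins.

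It then remains to verify that the first argument of the minimum is at least $8/15 - o(1)$ uniformly over $\beta\in[1,15/14)$. Substituting the chosen $\alpha$ and letting $\eps\to 0$, this argument simplifies to $f(\beta)=\frac{4(3\beta-1)}{15\beta}=\frac{4}{5}-\frac{4}{15\beta}$, which is strictly increasing in $\beta$ (its derivative is $4/(15\beta^2)>0$) and satisfies $f(1)=8/15$. Hence $f(\beta)\geq 8/15$ throughout the hard range, and the corrections introduced by the $(1-\eps)^2$ and $(1+\eps)$ factors amount to an additive $O(1/\log n)$ in the ratio. Combined with $O(\eps\mls)\subseteq o(\opt(\sigma))$, this shows that the asymptotic competitive ratio is at least $8/15$, matching the bound from the easy case and completing the proof.

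The only obstacle at this stage is the monotonicity and boundary arithmetic above, which confirms that the threshold $15/14$ built into Lemma~\ref{lem:easycaselog} is precisely the crossover point at which the $(\alpha,\eps)$-desirable construction of Lemma~\ref{smallbeta} begins to degrade toward $8/15$. All the genuinely delicate work—placing the small items via Dual-Worst-Fit around the unknown cutoff $\sblack$, and guaranteeing the pairing of large items through the three-case analysis—has already been absorbed into Lemmas~\ref{lem:alphades} and~\ref{smallbeta}, so the theorem proper is just a case split plus elementary calculus.
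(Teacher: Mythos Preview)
Your proposal is correct and essentially mirrors the paper's own proof: one bit to choose between Lemma~\ref{lem:easycaselog} and Lemma~\ref{smallbeta}, then in the hard case set $\alpha$ to the maximal value allowed by Lemma~\ref{smallbeta}, apply Lemma~\ref{lem:alphades}, and compute that $(\alpha+2\beta-1)/(2\beta)$ simplifies (as $\eps\to 0$) to $(12\beta-4)/(15\beta)\geq 8/15$ for $\beta\geq 1$. The paper makes the same substitution and arrives at the identical expression; the only cosmetic difference is that it explicitly lists the advice fields and the two extra bits selecting among the three sub-cases inside Lemma~\ref{smallbeta}, which you have folded into ``invoking Lemma~\ref{smallbeta}.''
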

\begin{proof}
  The advice indicates the values of \mlsm, \mllp, \leveldp, \sblackm,
  \sblackp, \eblackm, \mblackm, and \mlm.  These values can all be
  encoded in $O(\log \log n)$ bits of advice.  Note that one cannot
  calculate $\beta$ exactly, since $\mls$ and $\mll$ are not known
  exactly. Thus, the advice also includes $1$ bit to indicate if
  Lemma~\ref{lem:easycaselog} should be used because $\beta$ is larger
  than $15/14$, or $\mls = 0$. If not, the advice also indicates one
  of the three cases described above; this requires two more
  bits. Thus, the advice is $O(\log \log n)$ bits.

  If Lemma~\ref{lem:easycaselog} is used, the competitive ratio is at
  least $8/15$.  Otherwise, provided with this
  advice and a sufficiently small \eps, the algorithm
  can create an $(\alpha,\eps)$-desirable packing of the input sequence for any
  $\alpha \leq \frac{(1-\eps)^2(\frac76-\beta)}{1 +
    \frac32(1-\eps)^2(1+\eps)}$.
  By Lemma~\ref{lem:alphades}, the resulting packing has an asymptotic
  competitive ratio of at least
  $\frac{\alpha+2\beta-1}{2\beta}$, for $\eps \in o(1)$. Choosing
  $\alpha = \frac{(1-\eps)^2(\frac76-\beta)}{1 +
    \frac32(1-\eps)^2(1+\eps)}$ and $\eps \leq 1/\log n$ gives an algorithm with asymptotic competitive ratio at least
  $\frac{12\beta -4}{15\beta} \geq \frac{8}{15}$.
\end{proof}

\section{Impossibility result for advice of sub-linear size}\label{sec:sublin}
This section uses what is normally referred to as lower bound
techniques, but since our ratios are smaller than 1, an upper bound is
a negative result, and we refer to such results as negative or
impossibility results.
In what follows, we show that, in order to achieve any competitive
ratio larger than 15/16, advice of linear size is necessary. We use a
reduction from the binary separation problem, introduced
in~\cite{BoyarKLL16}:

\begin{definition}
  The {\em Binary Separation Problem} is the following online
  problem. The input
  $I = (n_1, \sigma= \left\langle y_1, y_2, \ldots, y_n
  \right\rangle)$ consists of $n = n_1+n_2$ positive values which are
  revealed one by one. There is a fixed partitioning of the set of
  values into a subset of $n_1$ {\em large} values and a subset of $n_2$
  {\em small} values, so that all large values are larger than all small
  values. Upon receiving a value $y_i$, an online algorithm must guess
  if $y$ belongs to the set of small or large values. After the
  algorithm has made a guess, it is revealed to the algorithm which
  class $y_i$ belongs to.
  The number $n_1$ is known to the algorithm from the beginning. 
\end{definition}

A reduction from a closely related problem named ``binary string
guessing with known history'' shows that, in order to guess more than
half of the values correctly, advice of linear size is required:

\begin{lemma}[\cite{BoyarKLL16}]\label{lem:seplow}
  For any fixed $\beta > 0$, any deterministic algorithm for the
  Binary Separation Problem that is guaranteed to guess correctly on
  more than $(1/2+\beta) n$ input values on an input of length~$n$
  needs at least $\Omega(n)$ bits of advice.
\end{lemma}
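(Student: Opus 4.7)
The plan is to prove this via a reduction from the Binary String Guessing Problem with Known History (BSGKH), which is known from~\cite{BockenhauerHKKSS14} to require $\Omega(n)$ advice bits for any deterministic online algorithm to guess correctly on more than $(1/2+\beta)n$ of its bits. The reduction transforms any BSGKH instance into a Binary Separation instance in such a way that a correct guess at step $i$ on one side corresponds to a correct guess at step $i$ on the other side.

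Given a BSGKH instance $b_1, b_2, \ldots, b_n$, I would construct the Binary Separation sequence online by maintaining scalars $(L, R)$, initialised to $(0, 1)$. At step $i$, emit the value $y_i = (L+R)/2$. Once the BSGKH oracle reveals $b_i$ after the guess, update $L \leftarrow y_i$ if $b_i = 0$ and $R \leftarrow y_i$ if $b_i = 1$, and interpret $b_i = 0$ as ``$y_i$ is small'' and $b_i = 1$ as ``$y_i$ is large''. By a simple induction, $L$ is at all times the maximum declared small value and $R$ the minimum declared large value, so every small value is strictly less than every large value; hence the sequence constitutes a valid instance of the Binary Separation Problem.

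Now compose the reduction with the hypothetical Binary Separation algorithm. Feed it the sequence $y_1, \ldots, y_n$ together with the advice it requires, translate each small/large guess to a $0/1$ guess on $b_i$, and relay the BSGKH oracle's revelation as feedback to the Binary Separation algorithm. Because correctness transfers step by step, the composite algorithm guesses correctly on more than $(1/2+\beta)n$ bits of the BSGKH instance using the same number of advice bits, up to a small overhead: Binary Separation assumes $n_1$ is known at the start, whereas BSGKH does not reveal it, so I would pre-pend $\lceil \log_2(n+1)\rceil$ extra bits encoding $n_1$. The composite algorithm thus uses $s(n) + O(\log n)$ advice bits, and the BSGKH lower bound forces $s(n) \in \Omega(n)$.

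The main obstacle, such as it is, is verifying that the reduction preserves the parameter $\beta$ without loss, and that the $O(\log n)$ overhead for communicating $n_1$ really is absorbed into the linear lower bound; the first is immediate from the step-by-step correspondence of correctness, and the second is routine because $O(\log n) = o(n)$. One could alternatively avoid the explicit encoding of $n_1$ by observing that the BSGKH lower bound already holds over inputs of any fixed composition, but the extra-bit approach is cleaner and costs nothing in the asymptotic bound.
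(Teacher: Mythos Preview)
Your proposal is correct and is precisely the approach the paper alludes to. The paper does not actually give its own proof of this lemma: it states the result with a citation to~\cite{BoyarKLL16} and remarks just before the statement that it follows by ``a reduction from a closely related problem named `binary string guessing with known history'\,''. Your bisection-style construction (maintaining the interval $(L,R)$ and emitting midpoints) is exactly that standard reduction, and your handling of the extra $O(\log n)$ bits to communicate $n_1$ is the usual way to absorb the side information required by the Binary Separation Problem.
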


The following lemma provides the actual reduction from the Binary
Separation Problem to bin covering.

\begin{lemma}\label{lem:mainred}
  Consider the bin covering problem on sequences of length~$2n$ for
  which \opt covers $n$ bins.  Assume that there is an online
  algorithm \A that solves the problem on these instances using $b(n)$
  bits of advice and covers at least $n-r(n)/8$ bins. Then there is
  also an algorithm \BSA that solves the Binary Separation Problem on
  sequences of length~$n$ using $b(n)+O(\log n)$ bits of advice and guessing
  incorrectly at most $r(n)$ times.
\end{lemma}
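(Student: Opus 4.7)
The plan is to reduce the Binary Separation Problem to online bin covering: $\BSA$ will run $\A$ as a black-box subroutine on a sequence of $2n$ items that it constructs online from the BSA input, derive its classification guesses from $\A$'s placement decisions, and pass through $\A$'s advice together with $O(\log n)$ extra bits encoding global parameters such as $n_1$ and $n$.

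\textbf{Instance construction.} For each $y_i$ received in the BSA stream, $\BSA$ feeds a small group of items (totalling $2n$ items over the whole simulation) to $\A$. Guided by the discussion in the introduction and Section~\ref{sect:lower}, I would choose item sizes from the two ``tricky'' regimes for bin covering, namely values close to $1$ and values close to $0$, and calibrate them so that (i) $\opt$ covers exactly $n$ bins on the constructed sequence, and (ii) covering the optimal bin associated with $y_i$ requires $\A$ to place its items in a configuration that is consistent with $y_i$'s true class. Because the true class of $y_i$ is revealed to $\BSA$ immediately after guessing, the subsequent items in the bin covering sequence may depend on both the true classification history and on $\A$'s placement history, preserving the invariant $\opt=n$.

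\textbf{Guess extraction.} After feeding the items associated with $y_i$ to $\A$, $\BSA$ inspects the placement chosen by $\A$ and outputs the BSA guess accordingly: a placement committing to the ``large-partnered'' role yields a guess of large, a placement committing to the ``small-partnered'' role yields small. Since $\A$ is deterministic and $\BSA$ observes its bins directly, this inspection requires no additional advice beyond the $b(n)$ bits forwarded to $\A$ and the $O(\log n)$ bits used to encode $n_1$ and $n$.

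\textbf{Error analysis and main obstacle.} The quantitative heart is a charging argument showing that each wrong BSA guess forces $\A$ to lose at most $1/8$ of a covered bin, equivalently each uncovered bin absorbs at most $8$ BSA errors. Combined with the hypothesis $\A(\sigma)\ge n-r(n)/8$, this immediately gives at most $r(n)$ wrong guesses. The constant $8$ reflects the coarser nature of bin covering: a single placement mistake can involve several items, and the resulting uncovered capacity can propagate across a constant number of neighbouring bins rather than cleanly costing exactly one extra bin as in bin packing. The main obstacle is designing the item-sizing scheme so that $\opt=n$ holds uniformly over every BSA input while every deviation by $\A$ from the intended pairing translates faithfully into a BSA error with the claimed constant. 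As emphasised in the introduction, the principal extra difficulty relative to the bin packing analogue is that two large items can legitimately combine to cover a single bin in bin covering; the charging must therefore rule out the pathological case where $\A$ rescues a mistake by matching two large items with each other instead of with small items, which presumably forces the construction to commit large items to their intended roles in a way that precludes such a switch.
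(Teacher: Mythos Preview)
Your high-level strategy matches the paper's: reduce BSA to bin covering by simulating $\A$ on a constructed sequence, read guesses off $\A$'s placements, and relate BSA errors to uncovered bins. But what you have written is a plan, not a proof; the two pieces that carry all the weight are absent, and one of your structural choices is at odds with what actually works.

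\textbf{The construction.} The paper does \emph{not} feed ``a small group of items'' per $y_i$ in an interleaved fashion. It uses three blocks: (i) $n_1$ huge items of size $1-\epsilon$, with $\epsilon<1/(2n)$, presented \emph{before} any BSA value arrives; (ii) one tiny item $f(y_t)\in(\epsilon,2\epsilon)$ per BSA value $y_t$, with $f$ increasing; (iii) after all $n$ guesses are made, $n_2$ complement items $1-f(y_t)$ for exactly those $y_t$ that turned out to be large. The guess rule is then concrete: output ``small'' iff $\A$ put $f(y_t)$ into a bin already containing a huge item. Placing the huge items first is what makes this rule well-defined; placing the complements last is what lets them depend on the revealed classification. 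Your interleaved description obscures both points. With this construction $\opt=n$ is immediate (huge$+$small and large$+$complement each fill a bin).

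\textbf{The constant $1/8$.} You assert a charging argument but give none; the paper's argument is a direct count rather than a charging scheme. Let $a_1$ be the number of small tiny items not placed with a huge item and $a_2$ the number of large tiny items that are. One bounds the number of bins $\A$ can cover by
\[
(n_1-a_1)+(n_2-a_2)+\min\{a_1,a_2\}+\min\{a_1/2,a_2\}+\tfrac12\bigl(\max\{a_1-a_2,0\}+\max\{a_2-a_1/2,0\}\bigr),
\]
the five terms being, respectively, bins covered as in \opt, huge$+$large bins, complement$+$two-smalls bins, and leftover huge items and complements paired with each other. A three-case analysis on the relative order of $a_1,a_2,a_1/2$ shows this is at most $n-(a_1+a_2)/8$. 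The $8$ is not ``one uncovered bin absorbs eight errors''; it falls out of the worst case in this algebra. Without this computation (or an equivalent), the lemma is unproved, and your final paragraph explicitly acknowledges that you have not yet designed the sizing scheme or established the constant.
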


\begin{proof}
  In the reduction, we encode requests for the algorithm
  \BSA as items for bin covering, which will be given to the algorithm
  \A.  Assume we are given an instance
  $I = (n_1, \sigma= \left\langle y_1, y_2, \ldots, y_n\right\rangle)$
  of the Binary Separation Problem, in which $n_1$ is the number of
  large values ($n_1+n_2 = n$), and the values $y_t$ are revealed
  in an online manner $(1 \leq t \leq n)$. We create an instance of
  the bin covering problem for \A which has length~$2n$.

  The bin covering sequence starts with $n_1$ ``huge'' items of size
  $1-\epsilon$ for some $\epsilon < \frac{1}{2n}$ (this will ensure
  that at least one item of size at least $1/2$ is in every covered
  bin). The value $n$ is given as advice, using $O(\log n)$ bits.
  In the optimal covering, each of these huge items will be
  placed in a different bin.  The next $n$ items are created in an
  online manner, so that we can use the result of their packing to
  guess the requests for the Binary Separation Problem. Let
  $\tau = y_t$ $(1 \leq t \leq n)$ be a requested value from the Binary
  Separation Problem, and choose an increasing function
  $f : \REALS^+ \rightarrow (\epsilon,2\epsilon)$.  When $\tau$ is
  presented to \BSA, $f(\tau)$ is presented to \A, and the item of size
  $f(\tau)$ is said to be associated with the request $\tau$.  If the
  algorithm places the item $f(\tau)$ in one of the bins opened by the
  huge items, \BSA will answer that $\tau$ is small; otherwise, \BSA
  will answer that $\tau$ is large.
  
  The last $n_2$ items of the bin covering instance are defined as
  complements of the items in the bin covering instance associated
  with large values in the binary separation instance (the complement
  of item $x$ is $1-x$). We do not need to give the last items
  complementing the small items in order to implement the algorithm
  \BSA, but we need them for the proof of the quality of the
  correspondence that we are proving.

  Call an item in the bin covering sequence ``large'' if it is
  associated with a large value in the Binary Separation Problem, and
  ``small'' if associated with a small value.  For the bin covering
  sequence produced by the reduction, an optimal algorithm pairs each
  of the small items with a huge item,
  placing it in one of the first $n_1$ bins. \opt pairs the large
  items with their complements, starting one of the next $n_2$ bins
  with each of these large items. Hence, the number of bins in an
  optimal covering is $n_1+n_2=n$.

  Let $a_1$ and $a_2$ denote the number of the two types of mistakes
  that the bin covering algorithm \A makes, causing incorrect answers
  to be given by \BSA.  Thus, we let $a_1$ denote the number of small
  items which do not get placed with a huge item, and $a_2$ denote the
  number of large items which are placed with a huge item.  Clearly,
  the number of errors in the answers provided by the binary
  separation algorithm \BSA is $a_1+a_2$. We claim that the number of
  bins covered by the bin covering algorithm \A is at most
  $n_1+n_2 -(a_1+a_2)/8$. \A covers at most $(n_1-a_1)+(n_2-a_2)$ bins
  in the same way that \opt does; these bins include either a huge
  item with a small item or a large item and its complement. Moreover,
  at most $\min\{a_1,a_2\}$ bins are covered by huge items that are
  placed with large items, but no small item. Similarly, at most
  $\min\{a_1/2,a_2\}$ bins are covered with two small items and a
  complement of a large item. Finally, there are $p=\max\{a_1-a_2,0\}$
  bins with huge items that are not covered with any small or large
  items; similarly, there are $q=\max\{a_2 -a_1/2,0\}$ complements of
  large items that are not covered by large or pairs-of-small
  items. These $p+q$ items can be paired to cover at most $(p+q)/2$
  bins.  In total, the number of bins that are covered by \A is
\[\begin{array}{cl}
  \alg(I) \leq  & (n_1-a_1) + (n_2-a_2) + \min\{a_1,a_2\} + \min\{a_1/2,a_2\} \\[1ex]
     & + \frac12(\max\{a_1-a_2,0\} + \max\{a_2-a_1/2,0\})
\end{array}\]

If $a_1\leq a_2$, the above value becomes $n_1+n_2-a_2/2 +a_1/4$ which
is indeed at most $n_1+n_2-(a_1+a_2)/8$. If $a_1/2\leq a_2 \leq a_1$,
the above value becomes $n_1+n_2-a_1/4$ which is at most
$n_1+n_2-(a_1+a_2)/8$. Finally, if $a_2< a_1/2$, the above value
becomes $n_1+n_2-(a_1+a_2)/2$ which is less than
$n_1+n_2-(a_1+a_2)/8$. In summary, when the algorithm \BSA makes
$a_1+a_2$ errors in partitioning small and large items, \A covers at
most $n-(a_1+a_2)/8$ bins (intuitively speaking, each eight mistakes
in binary guessing causes at least one fewer bin to be covered; see
Figure~\ref{fig:twopackings}). In other words, if the number of covered
bins is at least $n - r(n)/8$, then the number of binary separation
errors must be at most $r(n)$.
\end{proof}

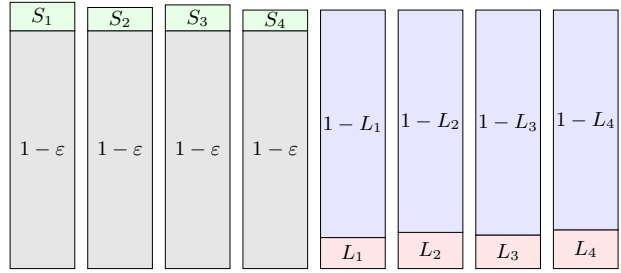
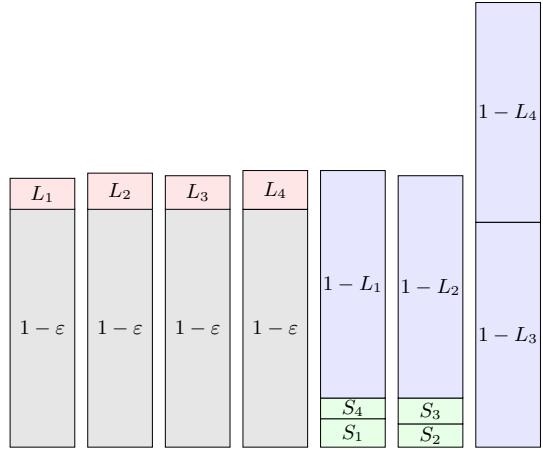
\begin{figure*}[t!]
    \centering
    \begin{subfigure}[t]{0.7\textwidth}
\scalebox{0.86}{
\begin{tikzpicture}[scale=0.04]
  \dgb{  0}{ 0}{92}{1-\epsilon}
  \lfb{  0}{92}{11}{S_1}
  \dgb{ 30}{ 0}{92}{1-\epsilon}
  \lfb{ 30}{92}{ 9}{S_2}
  \dgb{ 60}{ 0}{92}{1-\epsilon}
  \lfb{ 60}{92}{10}{S_3}
  \dgb{ 90}{ 0}{92}{1-\epsilon}
  \lfb{ 90}{92}{ 8}{S_4}
  \rb{120}{ 0}{12}{L_1}
  \bb{120}{12}{88}{1-L_1}
  \rb{150}{ 0}{14}{L_2}
  \bb{150}{14}{86}{1-L_2}
  \rb{180}{ 0}{13}{L_3}
  \bb{180}{13}{87}{1-L_3}
  \rb{210}{ 0}{15}{L_4}
  \bb{210}{15}{85}{1-L_4}
\end{tikzpicture}
}
\caption{The optimal covering}\label{subfig:opt}
\end{subfigure}

\bigskip
    
\begin{subfigure}[t]{0.7\textwidth}
\scalebox{0.86}{
\begin{tikzpicture}[scale=0.04]
  \dgb{  0}{ 0}{92}{1-\epsilon}
  \lfb{120}{ 0}{11}{S_1}
  \dgb{ 30}{ 0}{92}{1-\epsilon}
  \lfb{150}{ 0}{ 9}{S_2}
  \dgb{ 60}{ 0}{92}{1-\epsilon}
  \lfb{150}{ 9}{10}{S_3}
  \dgb{ 90}{ 0}{92}{1-\epsilon}
  \lfb{120}{11}{ 8}{S_4}
  \rb{  0}{92}{12}{L_1}
  \bb{120}{19}{88}{1-L_1}
  \rb{ 30}{92}{14}{L_2}
  \bb{150}{19}{86}{1-L_2}
  \rb{ 60}{92}{13}{L_3}
  \bb{180}{0}{87}{1-L_3}
  \rb{ 90}{92}{15}{L_4}
  \bb{180}{87}{85}{1-L_4}
\end{tikzpicture}
}
\caption{A covering with 8 mistakes}\label{subfig:online}
\end{subfigure}
\caption{Two coverings for the sequences used in reduction from binary
  separation to bin covering. Grey items show huge items that arrive
  first. Red and green items are respectively large and small items
  that need to be separated. The sequence ends with complements of
  large items. (a) shows an optimal packing and (b) shows a covering
  with eight mistakes; as a result of the eight mistakes, one fewer
  bin is covered.}
\label{fig:twopackings}
\end{figure*}

It turns out that reducing the Binary Separation Problem to bin
covering (the above lemma) is more involved than a similar reduction
to the bin packing problem~\cite{BoyarKLL16}.  The difference roots in
the fact that there are more ways to place items into bins in the bin
covering problem compared to bin packing; this is because many
arrangements of items are not allowed in bin packing due to the
capacity constraint.

\begin{theorem}\label{th:linlow}
  Consider the bin covering problem on sequences of length $n$. To
  achieve a competitive ratio of $15/16+\delta$, in which $\delta$ is
  a small, but fixed positive constant, an online algorithm needs to
  receive $\Omega(n)$ bits of advice.
\end{theorem}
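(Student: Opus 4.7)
The plan is to derive the theorem by contradiction, chaining together Lemma~\ref{lem:mainred} (reduction from binary separation to bin covering) and Lemma~\ref{lem:seplow} (linear advice lower bound for binary separation). Suppose for contradiction that there is an online bin covering algorithm \A that, on all sequences of length at most $n$, attains competitive ratio $15/16 + \delta$ using $b(n) = o(n)$ advice bits. By definition, there is some additive constant $b_0$ such that $\A(\sigma) \geq (15/16+\delta)\opt(\sigma) - b_0$ for every input $\sigma$.

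First I would instantiate the reduction of Lemma~\ref{lem:mainred} with a binary separation sequence of length $m = n/2$, producing a bin covering instance of length $n$ on which $\opt = m$. On such instances, the competitive guarantee of \A forces
\[
\A \geq (15/16+\delta)\, m - b_0,
\]
so the number of bins \A fails to cover, relative to \opt, is at most $(1/16 - \delta)m + b_0$. Applying Lemma~\ref{lem:mainred} with $r(m)$ defined by $r(m)/8 = (1/16-\delta)m + b_0$, i.e.\ $r(m) = (1/2 - 8\delta)m + 8b_0$, yields an algorithm \BSA for binary separation on sequences of length $m$ that uses $b(n) + O(\log n) = b(2m) + O(\log m)$ advice bits and makes at most $r(m)$ mistakes.

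Next I would translate the mistake count into a correctness count: \BSA answers at least $m - r(m) = (1/2 + 8\delta)m - 8b_0$ queries correctly. For $m$ large enough, this exceeds $(1/2 + 4\delta)m$, so \BSA is guaranteed to guess correctly on more than $(1/2 + \beta)m$ inputs for the fixed constant $\beta = 4\delta > 0$. Lemma~\ref{lem:seplow} then forces $b(2m) + O(\log m) \in \Omega(m)$, hence $b(n) \in \Omega(n)$, contradicting the assumption that $b(n) = o(n)$. This proves that any algorithm beating the ratio $15/16$ by a fixed constant must use linearly many advice bits.

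The main obstacle in turning this sketch into a proof is purely bookkeeping rather than conceptual: one must verify that the additive constant $b_0$ from the competitive definition and the $O(\log n)$ overhead used to transmit the value $n_1$ (the split parameter) in the reduction are indeed negligible relative to the linear slack $8\delta\, m$ obtained from the competitive hypothesis, so that the binary-separation accuracy of $(1/2 + \Omega(1))m$ holds for all sufficiently large $m$. Once this is done, Lemmas~\ref{lem:mainred} and~\ref{lem:seplow} plug together directly and deliver the $\Omega(n)$ advice lower bound.
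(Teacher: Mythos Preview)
Your proposal is correct and follows essentially the same approach as the paper: both chain Lemma~\ref{lem:mainred} with Lemma~\ref{lem:seplow}, converting the competitive guarantee $15/16+\delta$ into an error bound of roughly $(1/2-8\delta)m$ for the induced Binary Separation algorithm and then invoking the linear advice lower bound. Your write-up is in fact slightly more careful than the paper's, since you explicitly track the additive constant $b_0$ from the competitive-ratio definition and the $O(\log n)$ overhead, whereas the paper silently drops these.
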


\begin{proof}
Consider a bin covering algorithm \A with competitive ratio
$15/16+\delta$, and sequences of length $2n$ for which $\opt$ covers
$n$ bins. \A covers $(15/16+\delta)n = n -r(n)/8$ bins for $r(n) =
(1/2 - 8\delta)n$. Applying Lemma~\ref{lem:mainred}, we conclude
that there is an algorithm that solves the Binary Separation Problem
on sequences of length $n$ using at most $O(\log n)$ bits of
additional advice, while making at most $(1/2-8\delta)n$ errors. By
Lemma~\ref{lem:seplow}, we know that such an algorithm requires
$\Omega(n)$ bits of advice. Thus, \A requires $\Omega(n)$ bits of
advice as well.
\end{proof}

\section{Concluding remarks}
We have established that for bin covering $\Theta(\log\log n)$ bits of advice are
necessary and sufficient to improve the competitive ratio obtainable
by purely online algorithms. This differs significantly from the
results from bin packing, where a constant number of bits of advice are sufficient.

Obvious questions are: How much better than our bound of
$8/15=0.53\bar{3}$ can one do with $O(\log\log n)$ bits of advice?
Can one do better with $O(\log n)$ bits of advice?

\section{Acknowledgments}
We thank anonymous referees for their valuable comments.

\bibliographystyle{plain}
\bibliography{refs}

\end{document}